\def\ZP{{\mathbb{Z}^{+}}}
\title{Approximating the Revenue Maximization Problem with Sharp Demands
}
\author{Vittorio Bil\`o\inst{1} \and Michele Flammini\inst{2} \and Gianpiero Monaco \inst{2}}
\institute{{Department of Mathematics and Physics ``Ennio De
Giorgi", University of Salento\\
Provinciale Lecce-Arnesano, P.O. Box 193, 73100 Lecce, Italy\\ {\tt vittorio.bilo@unisalento.it}}\\
\and{Department of Information Engineering Computer Science and Mathematics, University of L'Aquila, Via Vetoio, Coppito, 67100 L'Aquila, Italy\\
{\tt \{flammini,gianpiero.monaco\}@di.univaq.it}}}
\begin{document}
\maketitle

\begin{abstract}
We consider the revenue maximization problem with sharp
multi-demand, in which $m$ indivisible items have to be sold to $n$
potential buyers. Each buyer $i$ is interested in getting exactly
$d_i$ items, and each item $j$ gives a benefit $v_{ij}$ to buyer
$i$. We distinguish between unrelated and related valuations. In the
former case, the benefit $v_{ij}$ is completely arbitrary, while, in
the latter, each item $j$ has a quality $q_j$, each buyer $i$ has a
value $v_i$ and the benefit $v_{ij}$ is defined as the product $v_i
q_j$. The problem asks to determine a price for each item and an
allocation of bundles of items to buyers with the aim of maximizing
the total revenue, that is, the sum of the prices of all the sold
items. The allocation must be envy-free, that is, each buyer must be
happy with her assigned bundle and cannot improve her utility. We
first prove that, for related valuations, the problem cannot be
approximated to a factor $O(m^{1-\epsilon})$, for any $\epsilon>0$,
unless {\sf P} = {\sf NP} and that such result is asymptotically
tight. In fact we provide a simple $m$-approximation algorithm even
for unrelated valuations. We then focus on an interesting subclass
of "proper" instances, that do not contain buyers a priori known not
being able to receive any item. For such instances, we design an
interesting $2$-approximation algorithm and show that no
$(2-\epsilon)$-approximation is possible for any $0<\epsilon\leq 1$,
unless {\sf P} $=$ {\sf NP}. We observe that it is possible to
efficiently check if an instance is proper, and if discarding
useless buyers is allowed,  an instance can be made proper in
polynomial time, without worsening the value of its optimal
solution.
\end{abstract}

\section{Introduction}
A major decisional process in many business activities concerns whom
to sell products (or services) to and at what price, with the goal
of maximizing the total revenue. On the other hand, consumers would
like to buy at the best possible prices and experience fair sale
criteria.

In this work, we address such a problem from a computational point
of view, considering a two-sided market in which the supply side
consists of $m$ indivisible items and the demand one is populated by
$n$ potential buyers (in the following also called consumers or
customers), where each buyer $i$ has a demand $d_i$ (the number of
items that $i$ requests) and valuations $v_{ij}$ representing the
benefit $i$ gets when owing item $j$. As several papers on this
topic (see for instance \cite{DK88,MZ88,GHKKKM05,CDGZ12,FFLS12}),
we assume that, by means of market research or interaction with the
consumers, the seller knows each customer's valuation for each item.

The seller sets up a price $p_j$ for each item $j$
and assigns (i.e., sells) bundle of items to buyers with the aim of maximizing
her revenue, that is the sum of the prices of all the sold items. When a consumer is assigned (i.e., buys) a set
of items, her utility is the difference between the total valuation
of the items she gets (valuations being additive) and the
purchase price.

The sets of the sold items, the purchasing customers and their
purchase prices are completely determined by the allocation of
bundles of items to customers unilaterally decided by the seller.
Nevertheless, we require such an allocation to meet two basic
fairness constraints: (i) each customer $i$ is allocated at most one
bundle not exceeding her demand $d_i$ and providing her a
non-negative utility, otherwise she would not buy the bundle; (ii),
the allocation must be envy-free \cite{W54}, i.e., each customer $i$
does not prefer any subset of $d_i$ items different from the bundle
she is assigned.

The envy-freeness notion adopted in this paper is the typical one of
pricing problems. Anyway, in the literature there also exist weaker
forms usually applied in fair division settings (see for instance
\cite{F67}) where, basically, no buyer wants to switch her
allocation with that of another buyer, without combining different
bundles. Notice that in our scenario a trivial envy-free solution
always exists that lets $p_j=\infty$ for each item $j$ and does not
assign any item to any buyer.

Many papers (see the {\em Related Work} section for a detailed reference list) considered
the {\em unit demand case} in which $d_i=1$ for each consumer $i$. Arguably, the
{\em multi-demand case}, where $d_i\geq 1$ for each consumer $i$, is more general and finds much more applicability.
To this aim, we can identify two main multi-demand schemes.
The first one is the {\em relaxed multi-demand model}, where each buyer $i$ requests at
most $d_i\geq 1$ items, and the second one is the {\em sharp multi-demand
model}, where each buyer $i$ requests exactly $d_i\geq 1$ items and,
therefore, a bundle of size less than $d_i$ has no value for buyer $i$.

For relaxed multi-demand models, a standard technique can reduce the
problem to the unit demand case in the following way: each buyer $i$
with demand $d_i$ is replaced by $d_i$ copies of buyer $i$, each
requesting a single item. However, such a trick does not apply to
the sharp demand model. Moreover, as also pointed out in
\cite{CDGZ12}, the sharp multi-demand model exhibits a property that
unit demand and relaxed multi-demand ones do not posses. In fact,
while in the latter model any envy-free pricing is such that the
price $p_j$ is always at most the value of $v_{ij}$, in the sharp
demand model, a buyer $i$ may pay an item $j$ more than her own
valuation for that item, i.e., $p_j>v_{ij}$ and compensate her loss
with profits from the other items she gets (see section 3.1 of
\cite{CDGZ12}). Such a property, also called {\em overpricing},
clearly adds an extra challenge to find an optimal revenue.

The sharp demand model is quite natural in several settings. Consider, for
instance, a scenario in which a public organization has the need of
buying a fixed quantity of items in order to reach a specific
purpose (i.e. locations for offices, cars for services, bandwidth, storage, or whatever
else), where each item might have a different valuation for the
organization because of its size, reliability, position, etc. Yet,
suppose a user wants to store on a remote server a file of a given
size $s$ and there is a memory storage vendor that sells slots of fixed size $c$, where each cell might have different
features depending on the server location and speed and then yielding different valuations for the user. In this
case, a number of items smaller than $\left\lceil\frac{s}{c}\right\rceil$ has no
value for the user. Similar scenarios also apply to cloud computing.
In \cite{CDGZ12}, the authors used the following applications for the sharp
multi-demand model. In TV (or radio) advertising \cite{NBCFGMRSTVZ09},
advertisers may request different lengths of advertising slots for
their ads programs. In banner (or newspaper) advertising,
advertisers may request different sizes or areas for their displayed
ads, which may be decomposed into a number of base units. Also,
consider a scenario in which advertisers choose to display their
advertisement using medias (video, audio, animation)
\cite{BCI98,R09} that would usually need a fixed number of positions,
while text ads would need only one position each. An example of
formulation sponsored search using sharp multi-demands can be found in
\cite{DSYZ10}. Other results concerning the sharp multi-demand model in the Bayesian setting can be found in \cite{DGTZ12}.

\noindent{\bf Related Work.} Pricing problems have been intensively
studied in the literature, see e.g.,
\cite{S82,SS79,M01a,M01b,R03,AFMZ04,GRV06} just to cite a few, both
in the case in which the consumers' preferences are unknown
(mechanism design \cite{V61,C71}) and in the case of full
information that we consider in this paper.
 In fact, our interest here is in maximizing the seller's
profit assuming that consumers' preferences are gathered through
market research or conjoint analysis
\cite{DK88,MZ88,GHKKKM05,CDGZ12,FFLS12}. From an algorithmic point
of view, \cite{GHKKKM05} is the first paper dealing with the problem
of computing the envy-free pricing of maximum revenue. The authors
considered the unit demand case for which they gave an $O(\log
n)$-approximation algorithm and showed that computing an optimal
envy-free pricing is APX-hard. Briest \cite{B08} showed that, under
reasonable complexity assumptions,
the revenue maximization problem in the unit demand model cannot be approximated within
$O(\log^{\varepsilon} n )$ for some $\varepsilon > 0$.
The subcase in which every buyer positively evaluates at most
two items has been studied in \cite{CD10}. The authors proved that the problem
is solvable in polynomial time and it becomes NP-hard if some buyer gets
interested in at least three items.

For the multi-demand model, Chen et. al. \cite{CGV11} gave an
$O(\log D)$ approximation algorithm when there is a metric space
behind all items, where $D$ is the maximum demand, and Briest
\cite{B08} showed that the problem is hard to approximate within a
ratio of $O(n^{\varepsilon})$ for some $\varepsilon > 0$.

To the best of our knowledge, \cite{CDGZ12} is the first paper
explicitly dealing with the sharp multi-demand model. The authors
considered a particular valuation scheme (also used in
\cite{EOS07,V07} for keywords advertising scenarios) where each item
$j$ has a parameter $q_j$ measuring the quality of the item and each
buyer $i$ has a value $v_i$ representing the benefit that $i$ gets
when owing an item of unit quality. Thus, the benefit that $i$
obtains from item $j$ is given by $v_iq_j$. For such a problem, the
authors proved that computing the envy-free pricing of maximum
revenue is NP-hard. Moreover, they showed that if the demand of each
buyer is bounded by a constant, the problem becomes solvable in
polynomial time. We remark that this valuation scheme is a special
case of the one in which the valuations $v_{ij}$ are completely
arbitrary and given as an input of the problem. Throughout the
paper, we will refer to the former scheme as to {\em related
valuations} and to the latter as to {\em unrelated valuations}.
Recently \cite{DGSTZ13} considered the sharp multi-demand model with
the additional constraint in which items are arranged as a sequence
and buyers want items that are consecutive in the sequence.

Finally \cite{FFLS12} studied the pricing problem in the case in
which buyers have a budget, but no demand constraints. The authors
considered a special case of related valuations in which all
qualities are equal to $1$ (i.e., $q_j=1$ for each item $j$). They
proved that the problem is still NP-hard and provided a
$2$-approximation algorithm. Such algorithm assigns the same price
to all the sold items.

Many of the papers listed above deal with the case of limited
supply. Another stream of research considers unlimited supply, that
is, the scenario in which each item $j$ exists in $e_j$ copies and
it is explicitly allowed that $e_j=\infty$. The limited supply
setting seems generally more difficult than the unlimited supply
one. In this paper we consider the limited supply setting.
Interesting results for unlimited supply can be found in
\cite{GHKKKM05,DFHS08,BBM08}.

\noindent{\bf Our Contribution.}
We consider the revenue maximization problem with sharp multi-demand
and limited supply. We first prove that, for related valuations, the
problem cannot be approximated to a factor $O(m^{1-\epsilon})$, for
any $\epsilon>0$, unless {\sf P} = {\sf NP} and that such result is
asymptotically tight. In fact we provide a simple $m$-approximation
algorithm even for unrelated valuations.

Our inapproximability proof relies on the presence of some buyers
not being able to receive any bundle of items in any envy-free
outcome. Thus, it becomes natural to ask oneself what happens for
instances of the problem, that we call {\em proper}, where no such
pathological buyers exist. For proper instances, we design an
interesting $2$-approximation algorithm and show that the problem
cannot be approximated to a factor $2-\epsilon$ for any
$0<\epsilon\leq 1$ unless {\sf P} $=$ {\sf NP}. Therefore, also in
this subcase, our results are tight. We remark that it is possible
to efficiently decide whether an instance is proper. Moreover, if
discarding useless buyers is allowed, an instance can be made proper
in polynomial time, without worsening the value of its optimal
solution.


\section{Model and Preliminaries}
In the {\em Revenue Maximization Problem with Sharp Multi-Demands}
({\sf RMPSD}) investigated in this paper, we are given a {\em
market} made up of a set $M=\{1,2,\ldots,m\}$ of {\em items} and a
set $N=\{1,2,\ldots,n\}$ of {\em buyers}. Each item $j\in M$ has
unit supply (i.e., only one available copy). We consider both
unrelated and related valuations. In the former each buyers $i$ has
valuations $v_{ij}$ representing the benefit $i$ gets when owing
item $j$. In the latter each item is characterized by a {\em
quality} (or desirability) $q_j>0$, while each buyer $i\in N$ has a
{\em value} $v_i>0$, measuring the benefit that she gets when
receiving a unit of quality, thus, the valuation that buyer $i$ has
for item $j$ is $v_{ij}=v_i q_j$. We notice that related is a
special case of unrelated valuations. Throughout the paper, when not
explicitly indicated, we refer to related valuations. Finally each
buyer $i$ has a {\em demand} $d_i\in\ZP$, which specifies the exact
number of items she wants to get. In the following we assume items
and bidders ordered in non-increasing order, that is, $v_i\geq
v_{i'}$ for $i<i'$ and $q_j\geq q_{j'}$ for $j<j'$.

An {\em allocation vector} is an $n$-tuple ${\bf
X}=(X_1,\ldots,X_n)$, where $X_i\subseteq M$, with
$|X_i|\in\{0,d_i\}$, $\sum_{i\in N}|X_i|\leq m$ and $X_i\cap
X_{i'}=\emptyset$ for each $i\neq i'\in N$, is the set of items sold
to buyer $i$. A {\em price vector} is an $m$-tuple ${\bf
p}=(p_1,\ldots,p_m)$, where $p_j>0$ is the price of item $j$. An
{\em outcome} of the market is a pair $({\bf X},{\bf p})$.

Given an outcome $({\bf X},{\bf p})$, we denote with $u_{ij}({\bf
p})=v_{ij}-p_j$ the utility that buyer $i$ gets when she is sold
item $j$ and with $u_i({\bf X},{\bf p})=\sum_{j\in X_i}u_{ij}({\bf
p})$ the overall utility of buyer $i$ in $({\bf X},{\bf p})$. When
the outcome (or the price vector) is clear from the context, we
simply write $u_i$ and $u_{ij}$. An outcome $({\bf X},{\bf p})$ is
{\em feasible} if $u_i\geq 0$ for each $i\in N$.

We denote with $M({\bf X})=\bigcup_{i\in N}X_i$ the set of items
sold to some buyer according to the allocation vector $\bf X$. We
say that a buyer $i$ is a {\em winner} if $X_i\neq\emptyset$ and we
denote with $W({\bf X})$ the set of all the winners in $\bf X$. For
an item $j\in M({\bf X})$, we denote with $b_{\bf X}(j)$ the buyer
$i\in W({\bf X})$ such that $j\in X_i$, while, for an item $j\notin
M({\bf X})$, we define $b_{\bf X}(j)=0$. Moreover, for a winner
$i\in W({\bf X})$, we denote with $f_{\bf X}(i)=\min\{j\in M:j\in
X_i\}$ the best-quality item in $X_i$. Also in this case, when the
allocation vector is clear from the context, we simply write $b(j)$
and $f(i)$. Finally, we denote with $\beta({\bf X})=\max\{i\in
N:i\in W({\bf X})\}$ the maximum index of a winner in $\bf X$. An
allocation vector $\bf X$ is {\em monotone} if $\min_{j\in
X_i}\{q_j\}\geq q_{f(i')}$ for each $i,i'\in W({\bf X})$ with
$v_i>v_{i'}$, that is, all the items of $i$ are of quality greater
of equal to the one of all the items of $i'$.

\begin{definition}
A feasible outcome $({\bf X},{\bf p})$ is an envy-free outcome if,
for each buyer $i\in N$, $u_i\geq\sum_{j\in T}u_{ij}$ for each
$T\subseteq M$ of cardinality $d_i$.
\end{definition}

Notice that, by definition, an outcome $({\bf X},{\bf p})$ is envy-free if and only if the following three conditions holds:
\begin{enumerate}
\item $u_i\geq 0$ for each $i\in N$,
\item $u_{ij}\geq u_{ij'}$ for each $i\in W({\bf X})$, $j\in X_i$ and $j'\notin X_i$,
\item $\sum_{j\in T}u_{ij}\leq 0$ for each $i\notin W({\bf X})$ and $T\subseteq M$ of cardinality $d_i$.
\end{enumerate}
Note also that, as already remarked, envy-free solutions always
exist, since the outcome $({\bf X},{\bf p})$ such that
$X_i=\emptyset$ for each $i\in N$ and $p_j=\infty$ for each $j\in M$
is envy-free. Moreover, deciding whether an outcome is envy-free can
be done in polynomial time.

By the definition of envy-freeness, if $i \in W(X)$ is a winner,
then all the buyers $i'$ with $v_{i'} > v_i$ and $d_{i'} \leq d_i$
must be winners as well, otherwise $i'$ would envy a subset of the
bundle assigned to $i$. This motivates the following definition,
which restricts to instances not containing buyers not being a
priori able to receive any item ({\em useless buyers}).

\begin{definition}
An instance $I$ is proper if, for each buyer $i\in N$, it holds
$d_i+\sum_{i' | v_{i'} > v_i,d_{i'} \leq d_i} d_{i'} \leq m$.
\end{definition}

The (market) {\em revenue} generated by an outcome $({\bf X},{\bf
p})$ is defined as $rev({\bf X},{\bf p})=\sum_{j\in M({\bf X})}p_j$.
{\sf RMPSD} asks for the determination of an envy-free outcome of
maximum revenue. We observe that it is possible to efficiently check
if an instance is proper, and if discarding useless buyers is
allowed,  an instance can be made proper in polynomial time, without
worsening the value of its optimal solution. An instance of the {\sf
RMPSD} problem can be modeled as a triple $({\bf V},{\bf D},{\bf
Q})$, where ${\bf V}=(v_1,\ldots,v_n)$ and ${\bf
D}=(d_1,\ldots,d_n)$ are the vectors of buyers' values and demands,
while ${\bf Q}=(q_1,\ldots,q_m)$ is the vector of item qualities. We
conclude this section with three lemmas describing some properties
that need to be satisfied by any envy-free outcome.
\begin{lemma}[\cite{CDGZ12}]\label{orderedqualities}
If an outcome $({\bf X},{\bf p})$ is envy-free, then $\bf X$ is monotone.
\end{lemma}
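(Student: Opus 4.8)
The plan is to argue by contradiction via a simple exchange argument, using only the second of the three envy-freeness conditions listed just after the definition of envy-freeness. Suppose $({\bf X},{\bf p})$ is envy-free but $\bf X$ is not monotone. Then there are two winners $i,i'\in W({\bf X})$ with $v_i>v_{i'}$ together with an item $a\in X_i$ such that $q_a<q_{f(i')}$; writing $b=f(i')\in X_{i'}$, recall that $q_b=\max_{j\in X_{i'}}q_j$ since items are indexed by non-increasing quality. Because distinct winners receive disjoint bundles, $a\notin X_{i'}$ and $b\notin X_i$.

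First I would apply the second envy-freeness condition to buyer $i$ with the pair $(a,b)$, where $a\in X_i$ and $b\notin X_i$, obtaining $u_{ia}\ge u_{ib}$, i.e. $v_i q_a-p_a\ge v_i q_b-p_b$, which rearranges to $v_i(q_a-q_b)\ge p_a-p_b$. Next I would apply the same condition to buyer $i'$ with the pair $(b,a)$, where $b\in X_{i'}$ and $a\notin X_{i'}$, obtaining $u_{i'b}\ge u_{i'a}$, i.e. $v_{i'}q_b-p_b\ge v_{i'}q_a-p_a$, which rearranges to $v_{i'}(q_a-q_b)\le p_a-p_b$. Chaining these two inequalities through the common quantity $p_a-p_b$ gives $v_{i'}(q_a-q_b)\le v_i(q_a-q_b)$, that is, $(v_i-v_{i'})(q_a-q_b)\ge 0$. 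Since $v_i>v_{i'}$, this forces $q_a\ge q_b=q_{f(i')}$, contradicting the choice of $a$; hence $\bf X$ is monotone.

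There is essentially no obstacle: the only point requiring a little care is the reduction to a single pair of items, which is handled by taking $a$ to be (a witness for) the lowest-quality item in $X_i$ and $b=f(i')$ the highest-quality item in $X_{i'}$, so that controlling this extremal pair controls all pairs; everything else is elementary algebra on the valuations $v_{ij}=v_i q_j$. I would also note explicitly, for completeness, that the second condition used above is itself an immediate consequence of the envy-freeness definition by replacing $j$ with $j'$ inside the bundle $X_i$ of a winner (a set of cardinality exactly $d_i$).
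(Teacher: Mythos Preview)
Your proof is correct and essentially identical to the paper's: both argue by contradiction via the same two applications of the second envy-freeness condition (to $i$ on the pair $(a,b)$ and to $i'$ on the pair $(b,a)$), and then derive $v_i\le v_{i'}$ from the resulting inequalities. The only cosmetic difference is that you factor to obtain $(v_i-v_{i'})(q_a-q_b)\ge 0$, whereas the paper divides both inequalities by the negative quantity $q_a-q_b$; your remark about choosing the extremal pair is fine but unnecessary, since the negation of monotonicity already hands you items $a\in X_i$, $b\in X_{i'}$ with $q_a<q_b$.
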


\begin{proof}
Let $({\bf X},{\bf p})$ be an envy-free outcome and assume, for the
sake of contradiction, that $\bf X$ is not monotone, i.e., that
there exist two buyers $i,i'\in W({\bf X})$ with $v_i>v_i'$ and two
items $j\in X_i$ and $j'\in X_{i'}$ such that $q_j<q_{j'}$. By the
envy-freeness of $({\bf X},{\bf p})$, it holds $u_{ij}\geq u_{ij'}$
which implies $p_j-p_{j'}\leq v_i(q_j-q_{j'})$ and $u_{i'j'}\geq
u_{i'j}$ which implies $p_j-p_{j'}\geq v_{i'}(q_j-q_{j'})$. By
dividing both inequalities by $q_j-q_{j'}<0$, we get
$v_{i'}\geq\frac{p_j-p_{j'}}{q_j-q_{j'}}$ and
$v_i\leq\frac{p_j-p_{j'}}{q_j-q_{j'}}$ which implies $v_i\leq
v_{i'}$, a contradiction.\qed
\end{proof}

Given an outcome $({\bf X},{\bf p})$, an item $j\in X_i$ is {\em overpriced} if $u_{ij}<0$.

\begin{lemma}[\cite{CDGZ12}]\label{nooverprice}
Let $({\bf X},{\bf p})$ be an envy-free outcome. For each overpriced item $j'\in M({\bf X})$, it holds $b(j')=\beta({\bf X})$.
\end{lemma}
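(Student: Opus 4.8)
The plan is to argue by contradiction, using only the feasibility condition $u_i\ge 0$ together with the second of the three envy-freeness conditions. Let $j'\in M({\bf X})$ be overpriced, set $i=b(j')$, and suppose for contradiction that $i\neq\beta({\bf X})$. Since $j'\in X_i$, buyer $i$ is a winner, and since $i$ is not the winner of largest index, there is a winner $i''$ with $i''>i$ (take, e.g., $i''=\beta({\bf X})$); by the assumed non-increasing ordering of buyers, $v_{i''}\le v_i$.

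The crucial step is to observe that the overpriced item $j'$ certifies a lower bound on the prices of \emph{all} items outside $X_i$. Indeed, by envy-freeness condition~2 applied to the winner $i$, we have $u_{ij'}\ge u_{ik}$ for every $k\notin X_i$; since $j'$ is overpriced, $u_{ij'}<0$, hence $u_{ik}<0$, i.e. $p_k>v_iq_k$, for every $k\notin X_i$. Because $X_{i''}$ is disjoint from $X_i$, this applies to each $k\in X_{i''}$, and combining it with $v_{i''}\le v_i$ and $q_k>0$ yields $p_k>v_iq_k\ge v_{i''}q_k$, so $u_{i''k}<0$ for every item $k$ held by $i''$.

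It then remains only to sum: since $i''$ is a winner, $X_{i''}\neq\emptyset$, so $u_{i''}=\sum_{k\in X_{i''}}u_{i''k}<0$, contradicting feasibility. Hence $i=\beta({\bf X})$. I do not foresee a genuine obstacle here: the argument is short and self-contained, and does not even need monotonicity (Lemma~\ref{orderedqualities}). The only points to handle with care are that the buyer ordering gives $v_{i''}\le v_i$, that positivity of the qualities preserves the strict inequality after multiplication, and that a winner necessarily owns a nonempty — and, by the above, entirely loss-making — bundle.
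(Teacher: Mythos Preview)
Your proof is correct and uses essentially the same ingredients as the paper's: contradiction, envy-freeness condition~2 for the winner $i$, the ordering $v_i\ge v_{\beta({\bf X})}$, and feasibility of the last winner. The only cosmetic difference is the direction in which you run the contradiction---the paper picks a single item $j'\in X_{\beta({\bf X})}$ with $u_{\beta({\bf X})j'}\ge 0$ and pushes it back to obtain $u_{ij}\ge 0$, whereas you push the overpriced item forward to force $u_{\beta({\bf X})}<0$.
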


\begin{proof}
Let $({\bf X},{\bf p})$ be an envy-free outcome and assume, for the
sake of contradiction, that there exists an overpriced item $j\in
X_i$ with $i<\beta({\bf X})$. Hence, $u_{ij}=v_{ij} -p_j<0$. Since
$\beta({\bf X})\in W({\bf X})$ and $({\bf X},{\bf p})$ is feasible,
it holds $u_{\beta({\bf X})}\geq 0$ which implies that there exists
an item $j'\in X_{\beta({\bf X})}$ such that $u_{\beta({\bf X})
j'}=v_{\beta({\bf X}) j'}-p_{j'}\geq 0$. Moreover, by the
envy-freeness of $({\bf X},{\bf p})$, it also holds $u_{ij}\geq
u_{ij'}$. By using $v_i\geq v_{\beta({\bf X})}$, we get $u_{ij}\geq
u_{ij'}=v_i q_{j'}-p_{j'}\geq v_{\beta({\bf
X})}q_{j'}-p_{j'}=u_{\beta({\bf X}) j'}\geq 0$, which contradicts
the assumption that $u_{ij}<0$.\qed
\end{proof}

The following lemma establishes that, if a buyer $i$ is not a
winner, then the total number of items assigned to buyers with
valuation strictly smaller than $v_i$ is less than $d_i$.

\begin{lemma}\label{noholes}
Let $({\bf X},{\bf p})$ be an envy-free outcome. For each buyer $i$ such that $i\notin W({\bf X})$, it holds $d_i>\sum_{k>i:k\in W({\bf X})}d_{k}$.
\end{lemma}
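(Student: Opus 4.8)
The plan is to argue by contradiction. Suppose $({\bf X},{\bf p})$ is envy‑free, $i\notin W({\bf X})$, and yet $d_i\le\sum_{k>i:\,k\in W({\bf X})}d_k=:S$. Since $d_i\ge 1$, this forces $K:=\{k>i:k\in W({\bf X})\}$ to be non‑empty; let $R:=\bigcup_{k\in K}X_k$ be the set of all items held by buyers in $K$, so that $|R|=S\ge d_i$ because the bundles are pairwise disjoint. Let $T\subseteq R$ be a choice of $d_i$ items of $R$ with the largest values $u_{ij}$. Because $i$ is not a winner, condition~3 of envy‑freeness (applied to this particular $T$) gives $\sum_{j\in T}u_{ij}\le 0$. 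The goal is to contradict this by showing $\sum_{j\in T}u_{ij}>0$. Note that Lemma~\ref{orderedqualities} is not needed here; only the ordering of the $v_i$'s and Lemma~\ref{nooverprice} will be used.

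The next step is to control the sign of $u_{ij}$ on $R$. Write $\beta:=\beta({\bf X})$; since $K\neq\emptyset$ we have $\beta\ge\max K>i$, hence $\beta\in K$ and $X_\beta\subseteq R$. For an item $j\in R$ with $b(j)\neq\beta$, Lemma~\ref{nooverprice} tells us $j$ is not overpriced, i.e. $u_{b(j)j}\ge 0$; since $b(j)>i$ implies $v_i\ge v_{b(j)}$ and $q_j>0$, we get $u_{ij}=v_iq_j-p_j\ge v_{b(j)}q_j-p_j=u_{b(j)j}\ge 0$, with strict inequality whenever $v_i>v_{b(j)}$. For the items of $\beta$, feasibility gives $u_\beta=\sum_{j\in X_\beta}u_{\beta j}\ge 0$, while termwise $u_{ij}-u_{\beta j}=(v_i-v_\beta)q_j\ge 0$; summing, $\sum_{j\in X_\beta}u_{ij}\ge\sum_{j\in X_\beta}u_{\beta j}\ge 0$.

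Now assemble the estimate. Split $T=(T\setminus X_\beta)\cup(T\cap X_\beta)$. Every item of $T\setminus X_\beta$ contributes a non‑negative amount to $\sum_{j\in T}u_{ij}$. By the choice of $T$, the set $T\cap X_\beta$ consists of the $t:=|T\cap X_\beta|$ items of $X_\beta$ maximizing $u_{ij}$, so its total $i$‑utility is at least the proportional share $\frac{t}{d_\beta}\sum_{j\in X_\beta}u_{ij}\ge 0$ of the non‑negative total over $X_\beta$ (the top $t$ of $d_\beta$ values have average at least the overall average). Hence $\sum_{j\in T}u_{ij}\ge 0$, and since $T$ is non‑empty and — under the reading of the statement in which every buyer counted on the right‑hand side has value strictly below $v_i$ — each of the contributions above is in fact strictly positive, we obtain $\sum_{j\in T}u_{ij}>0$, the desired contradiction.

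I expect the two delicate points to be precisely these. First, the winner $\beta({\bf X})$ may legitimately hold overpriced items, so one cannot simply claim that all of $R$ is non‑negatively valued by $i$; this is exactly what the averaging remark circumvents, using only that $u_\beta\ge 0$ over the whole bundle $X_\beta$ together with the termwise inequality $u_{ij}\ge u_{\beta j}$. Second, upgrading "$\ge 0$" to "$>0$" requires that the winners summed over have value \emph{strictly} less than $v_i$, which matches the informal wording of the lemma; if one keeps the index‑based formulation literally, a winner $k>i$ with $v_k=v_i$ must first be handled by the standard relabeling among equal‑valued buyers, after which the argument above goes through unchanged.
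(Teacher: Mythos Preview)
Your argument is correct, and your closing remarks about the strict-versus-index reading are well taken: the lemma, as intended (and as the paper itself uses it, beginning the proof with ``let $i,i'$ be two buyers such that $v_i>v_{i'}$''), is about winners of strictly smaller value.

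However, you take a noticeably longer route than the paper. The paper does not invoke Lemma~\ref{nooverprice} at all. Its idea is simply that feasibility gives $u_k\ge 0$ for \emph{every} winner $k\in K$, hence
\[
\sum_{j\in R}u_{b(j)j}\;=\;\sum_{k\in K}u_k\;\ge\;0,
\]
so some size-$d_i$ subset $T\subseteq R$ (e.g.\ the top $d_i$ items ranked by $u_{b(j)j}$) already has $\sum_{j\in T}u_{b(j)j}\ge 0$. One then compares termwise: $u_{ij}=v_iq_j-p_j\ge v_{b(j)}q_j-p_j=u_{b(j)j}$ for every $j\in T$, with strict inequality on at least one term because some $b(j)$ has $v_{b(j)}<v_i$. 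This immediately gives $\sum_{j\in T}u_{ij}>0$ and the contradiction.

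Your detour through Lemma~\ref{nooverprice} and the separate averaging over $X_{\beta}$ does work, but it is not needed: by using feasibility of \emph{all} winners in $K$ at once (rather than singling out $\beta$), you get the non-negativity of the owner-utilities on a suitable $T$ in one stroke, and the transfer to buyer $i$'s utilities is a single termwise comparison. A small wording point: your ``each of the contributions above is in fact strictly positive'' is not literally true when $T\cap X_\beta=\emptyset$ or $T\setminus X_\beta=\emptyset$; what you actually need, and what your case analysis does establish, is that at least one of the two parts is strictly positive while the other is non-negative.
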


\begin{proof}
Let $({\bf X},{\bf p})$ be an envy-free outcome and let $i,i'$ be
two buyers such that $v_i>v_{i'}$, $i\notin W({\bf X})$ and $i'\in
W({\bf X})$. Assume, for the sake of contradiction, that $d_i\leq
\sum_{k>i:k\in W({\bf X})}d_{k}$. This implies that there exists
$T\subseteq M$, of cardinality $d_i$, such that all items $j\in T$
are assigned to buyers with values of at most $v_i$ and at least one
item $j'\in T$ is assigned to buyer $i'$. Moreover, since $u_k\geq
0$ for each $k\in W({\bf X})$ by the feasibility of $({\bf X},{\bf
p})$, there exists one such $T$ for which $u_{i'j'}+\sum_{j\in
T\setminus\{j'\}}u_{b(j)j}\geq 0$. Hence, we obtain $$\sum_{j\in
T}u_{ij}=u_{ij'}+\sum_{j\in
T\setminus\{j'\}}u_{ij}>u_{i'j'}+\sum_{j\in
T\setminus\{j'\}}u_{b(j)j}\geq 0,$$ where the strict inequality
follows from the fact that $v_i>v_{i'}$ and $v_i\geq v_{b(j)}$ for
each $j\in T\setminus\{j'\}$. Thus, since there exists a set of
items $T$ of cardinality $d_i$ such that $\sum_{j\in T}u_{ij}>0$, it
follows that $({\bf X},{\bf p})$ is not envy-free, a
contradiction.\qed
\end{proof}

\section{A Pricing Scheme for Monotone Allocation Vectors}
Since we are interested only in envy-free outcomes, by Lemma
\ref{orderedqualities}, in the following we will implicitly assume
that any considered allocation vector is monotone.

We call {\em pricing scheme} a function which, given an allocation
vector $\bf X$, returns a price vector. In this section, we propose
a pricing scheme for allocation vectors which will be at the basis
of our approximability and inapproximability results. For the sake of readability,
in describing the following pricing function, given $\bf X$,
we assume a re-ordering of the buyers in such a way that all the winners appear first, still
in non-increasing order of $v_i$.

\noindent For an allocation vector $\bf X$, define the price vector $\widetilde{{\bf p}}$ such that, for each $j\in M$,
\begin{displaymath}
\widetilde{p}_j=\left\{
\begin{array}{ll}
\infty & \textrm{  if }b(j)=0,\\
v_{b(j)} q_j-\displaystyle\sum_{k=b(j)+1}^{\beta({\bf X})}\left((v_{k-1}-v_k)q_{f(k)}\right) & \textrm{ otherwise}.
\end{array}\right.
\end{displaymath}
Quite interestingly, such a scheme resembles one presented \cite{HY11}.
Next lemma shows that $\widetilde{{\bf p}}$ is indeed a price vector.

\begin{lemma}\label{prezzinonnegativi}
For each $j\in M$, it holds $\widetilde{p}_j>0$.
\end{lemma}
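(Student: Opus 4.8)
The plan is to treat the two cases of the definition of $\widetilde{p}_j$ separately. If $b(j)=0$ then $\widetilde{p}_j=\infty>0$ and there is nothing to prove, so fix an item $j$ with $b(j)=i>0$; recall that the buyers have been re-ordered so that the winners are $1,\dots,\beta({\bf X})$ in non-increasing order of value, that $j\in X_i$, and that by Lemma~\ref{orderedqualities} (which is exactly the hypothesis under which the scheme is meaningful) $\bf X$ is monotone. The goal is to bound from above the subtracted quantity $S:=\sum_{k=i+1}^{\beta({\bf X})}(v_{k-1}-v_k)\,q_{f(k)}$ by $q_j\,(v_i-v_{\beta({\bf X})})$.

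First I would observe that every summand of $S$ that is not automatically zero comes from an index $k$ with $v_k<v_i$: indeed, if $v_{k-1}-v_k>0$ then, since the values are non-increasing and $i\le k-1$, we get $v_i\ge v_{k-1}>v_k$. For any such $k$ both $i$ and $k$ are winners with $v_i>v_k$, so monotonicity of $\bf X$ yields $q_j\ge\min_{\ell\in X_i}q_\ell\ge q_{f(k)}$. Consequently $(v_{k-1}-v_k)\,q_{f(k)}\le (v_{k-1}-v_k)\,q_j$ holds for \emph{every} index $k$ in the range $i+1,\dots,\beta({\bf X})$ (with equality whenever the weight $v_{k-1}-v_k$ vanishes, in which case the inequality is trivial regardless of the qualities involved).

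Summing these inequalities and telescoping gives $S\le q_j\sum_{k=i+1}^{\beta({\bf X})}(v_{k-1}-v_k)=q_j\,(v_i-v_{\beta({\bf X})})$, whence $\widetilde{p}_j=v_i q_j-S\ge v_i q_j-q_j\,(v_i-v_{\beta({\bf X})})=q_j\,v_{\beta({\bf X})}>0$, using $q_j>0$ and $v_{\beta({\bf X})}>0$. The degenerate case $\beta({\bf X})=i$ (empty sum) is covered as well, since then $\widetilde{p}_j=v_i q_j>0$.

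The only delicate point is the first step: a priori the items $f(k)$ appearing in $S$ need not be sorted by quality, so one cannot bound $q_{f(k)}$ by the quality of the "next" top item; the crux is precisely that the coefficients $v_{k-1}-v_k$ annihilate all the terms with $v_k=v_i$, leaving only indices to which monotonicity applies through a \emph{strict} value inequality. The remainder is a routine telescoping estimate.
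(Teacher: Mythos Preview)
Your proof is correct. The paper takes a slightly different but closely related route: it applies an Abel-summation (summation by parts) rewriting,
\[
\widetilde{p}_j \;=\; v_{b(j)}\bigl(q_j-q_{f(b(j)+1)}\bigr)\;+\;\sum_{k=b(j)+1}^{\beta({\bf X})-1}\bigl(q_{f(k)}-q_{f(k+1)}\bigr)\,v_k \;+\; v_{\beta({\bf X})}\,q_{f(\beta({\bf X}))},
\]
and then argues that the last summand is strictly positive while all the others are non-negative by monotonicity. Your argument instead bounds the subtracted sum $S$ term by term---replacing each $q_{f(k)}$ by $q_j$ whenever the weight $v_{k-1}-v_k$ is nonzero---and telescopes the \emph{values}, giving the clean lower bound $\widetilde{p}_j\ge q_j\,v_{\beta({\bf X})}$. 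The two are dual telescoping arguments of the same length; your version has the mild advantage that it handles ties $v_{k-1}=v_k$ completely explicitly, whereas the paper's rewriting tacitly needs $q_{f(k)}\ge q_{f(k+1)}$ for every consecutive pair of winners, something the stated monotonicity definition guarantees only when $v_k>v_{k+1}$.
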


\begin{proof}
Clearly, the claim holds for each $j$ such that $b(j)\in\{0,\beta({\bf X})\}$. For each $j$ such that $0<b(j)<\beta({\bf X})$, it holds
\begin{eqnarray*}
\widetilde{p}_j & = & v_{b(j)} q_j-\sum_{k=b(j)+1}^{\beta({\bf X})}\left((v_{k-1}-v_k)q_{f(k)}\right)\\
& = & v_{b(j)}(q_j-q_{f(b(j)+1)})+\sum_{k=b(j)+1}^{\beta({\bf X})-1}\left((q_{f(k)}-q_{f(k+1)})v_k\right)+v_{\beta({\bf X})} q_{f({\beta({\bf X})})}\\
& > & 0,
\end{eqnarray*}
where the inequality holds since $v_{\beta({\bf X})} q_{f({\beta({\bf X})})}>0$ and all the other terms are non-negative since $\bf X$ is monotone.\qed
\end{proof}

We continue by showing the following important property, closely
related to the notion of envy-freeness, possessed by the outcome
$({\bf X},\widetilde{{\bf p}})$ for each allocation vector $\bf X$.

\begin{lemma}\label{prezzatura}
For each allocation vector $\bf X$, the outcome $({\bf
X},\widetilde{{\bf p}})$ is feasible and, for each winner $i\in
W({\bf X})$, $u_i\geq\sum_{j\in T}u_{ij}$ for each $T\subseteq M$ of
cardinality $d_i$. Thus, the allocation is envy-free for the subset
of the winners buyers.
\end{lemma}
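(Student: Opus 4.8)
The plan is to turn the definition of $\widetilde{{\bf p}}$ into an explicit description of the utilities in $({\bf X},\widetilde{{\bf p}})$ and then reduce the whole statement to one pointwise inequality. First, for a winner $i$ and an item $j\in X_i$ (so $b(j)=i$), unwinding the definition yields $u_{ij}=v_iq_j-\widetilde{p}_j=\sum_{k=i+1}^{\beta({\bf X})}(v_{k-1}-v_k)q_{f(k)}=:\gamma_i$, a quantity independent of the particular $j\in X_i$. Since ${\bf X}$ is monotone, the winners are listed in non-increasing order of value, and all qualities are positive, every term of $\gamma_i$ is non-negative, hence $\gamma_i\ge 0$; consequently $u_i=\sum_{j\in X_i}u_{ij}=d_i\gamma_i\ge 0$ for every winner, and $u_i=0\ge 0$ for every non-winner. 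This already establishes feasibility.

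For the remaining assertion I would prove the claim that for every winner $i$ and every item $j\in M$ one has $u_{ij}\le\gamma_i$, with the convention $u_{ij}=-\infty$ for an unsold $j$ (harmless, since then $\widetilde{p}_j=\infty$). Granting this, any $T\subseteq M$ with $|T|=d_i$ satisfies $\sum_{j\in T}u_{ij}\le d_i\gamma_i=u_i$, which is exactly the inequality asked for; moreover, since $u_{ij}=\gamma_i$ for $j\in X_i$ and $u_{ij}\le\gamma_i$ otherwise, we also get $u_{ij}\ge u_{ij'}$ for all $j\in X_i$ and $j'\notin X_i$, so together with $u_i\ge 0$ the outcome is envy-free when restricted to the set of winners.

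The core of the proof is the pointwise inequality, which is a telescoping computation. For a sold item $j\in X_{i'}$ the definition gives $\widetilde{p}_j=v_{i'}q_j-\gamma_{i'}$, hence $u_{ij}-\gamma_i=(v_i-v_{i'})q_j+\gamma_{i'}-\gamma_i$. Writing $v_i-v_{i'}=\pm\sum_k(v_{k-1}-v_k)$ over the block of indices between $i'$ and $i$ and pairing it termwise with $\gamma_{i'}-\gamma_i=\mp\sum_k(v_{k-1}-v_k)q_{f(k)}$, the difference collapses to $\sum_k(v_{k-1}-v_k)\,(q_{f(k)}-q_j)$ when $i'<i$, and to $\sum_k(v_{k-1}-v_k)\,(q_j-q_{f(k)})$ when $i'>i$, the sum ranging over $k$ strictly between the two indices. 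In either case each factor $v_{k-1}-v_k$ is non-negative, and monotonicity of ${\bf X}$ forces the other factor to be non-positive: when $i'<i$ every contributing $k$ has $v_k<v_{i'}$, so $q_j\ge q_{f(k)}$; when $i'>i$ every contributing $k$ has $v_k\ge v_{i'}$, so $q_{f(k)}\ge q_j$. Hence $u_{ij}\le\gamma_i$ in all cases, with equality when $j\in X_i$.

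The one delicate point I expect is ties among the values $v_i$. Terms with $v_{k-1}=v_k$ drop out, so only the indices $k$ that are the first winner of their value class contribute, and for those one still needs the comparison between $q_j$ and $q_{f(k)}$ used above. If $i'$ belongs to a value class strictly below that of $k$ this is immediate from monotonicity; the only case requiring attention is when $i'$ belongs to the same class as $k$, where it suffices to fix the within-class ordering of the winners so that the class representative owns a highest-quality item of the class, which yields $q_{f(k)}\ge q_{f(i')}\ge q_j$. Everything beyond this is routine bookkeeping.
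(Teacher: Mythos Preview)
Your argument is correct and is essentially the paper's proof, recast through the observation that $u_{ij}=\gamma_i$ is constant on $X_i$; both proofs reduce to the same telescoping identity $u_{ij}-\gamma_i=\pm\sum_k(v_{k-1}-v_k)(q_{f(k)}-q_j)$ and the same appeal to monotonicity of $\mathbf X$. One small slip: the index $k$ runs from the smaller of $i,i'$ plus one through the larger \emph{inclusive}, not strictly between them, so in the case $i'>i$ the endpoint $k=i'$ is handled by $q_{f(i')}\ge q_j$ (definition of $f$) rather than monotonicity---and your explicit treatment of ties is actually cleaner than the paper's, which silently assumes the within-class ordering is benign.
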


\begin{proof}
Given an allocation vector $\bf X$, consider a winner $i\in W({\bf
X})$. If $i$ is the only winner in $W({\bf X})$, it immediately
follows that $u_i\geq\sum_{j\in T}u_{ij}$ for each $T\subseteq M$ of
cardinality $d_i$ since items not assigned to $i$ have infinite
price. We prove this claim for the case in which $|W({\bf X})|>1$ by
showing that, for each $j,j'\in M$ such that $j\in X_i$ and
$j'\notin X_i$, it holds $u_{ij}\geq u_{ij'}$.

To this aim, consider an item $j'$ such that $0<b(j')<i=b(j)$ (whenever it exists). It holds
\begin{eqnarray*}
u_{ij}-u_{ij'} & = & v_i q_j - \widetilde{p}_j - v_i q_{j'} + \widetilde{p}_{j'}\\
& = & v_i q_{j}- v_{b(j)} q_{j}+\sum_{k=b(j)+1}^{\beta({\bf X})}\left((v_{k-1}-v_k)q_{f(k)}\right)\\
& & -v_i q_{j'}+v_{b(j')}q_{j'}-\sum_{k=b(j')+1}^{\beta({\bf X)}}\left((v_{k-1}-v_k)q_{f(k)}\right)\\
& = & v_{b(j')}q_{j'}- v_i q_{j'}+\sum_{k=i+1}^{\beta({\bf X)}}\left((v_{k-1}-v_k)q_{f(k)}\right)-\sum_{k=b(j')+1}^{\beta({\bf X)}}\left((v_{k-1}-v_k)q_{f(k)}\right)\\
& = & (v_{b(j')}-v_i)q_{j'}-\sum_{k=b(j')+1}^{i}\left((v_{k-1}-v_k)q_{f(k)}\right)\\
& \geq & (v_{b(j')}-v_i)q_{j'}-\sum_{k=b(j')+1}^{i}\left((v_{k-1}-v_k)q_{j'}\right)\\
& = & (v_{b(j')}-v_i)q_{j'}-(v_{b(j')}-v_i)q_{j'}\\
& = & 0,
\end{eqnarray*}
where the second equality comes from $i=b(j)$ and the inequality follows from the monotonicity of $\bf X$.

Now consider an item $j'$ such that $b(j')>i=b(j)$ (whenever it exists). Similarly as above, it holds
\begin{eqnarray*}
u_{ij}-u_{ij'} & = & v_i q_j - \widetilde{p}_j - v_i q_{j'} + \widetilde{p}_{j'}\\
& = & v_{b(j')}q_{j'}- v_i q_{j'}+\sum_{k=i+1}^{\beta({\bf X})}\left((v_{k-1}-v_k)q_{f(k)}\right)-\sum_{k=b(j')+1}^{\beta({\bf X})}\left((v_{k-1}-v_k)q_{f(k)}\right)\\
& = & (v_{b(j')}-v_i)q_{j'}+\sum_{k=i+1}^{b(j')}\left((v_{k-1}-v_k)q_{f(k)}\right)\\
& \geq & (v_{b(j')}-v_i)q_{j'}+\sum_{k=i+1}^{b(j')}\left((v_{k-1}-v_k)q_{j'}\right)\\
& = & (v_{b(j')}-v_i)q_{j'}+(v_i-v_{b(j')})q_{j'}\\
& = & 0,
\end{eqnarray*}
where the inequality follows from the monotonicity of $\bf X$ and the fact that $q_{j'}\leq q_{f(b(j'))}$ by the definition of $f_{\bf X}$.

Finally, for any item $j'$ with $b_{j'}=0$, for which it holds $\widetilde{p}_{j'}=\infty$, $u_{ij}\geq u_{ij'}$ trivially holds.

Thus, in order to conclude the proof, we are just left to show that
$u_i\geq 0$ for each $i\in W({\bf X})$. To this aim, note that, for
each $j'\in X_{\beta({\bf X})}$, it holds $u_{\beta({\bf X})j'}=0$
by definition of $\widetilde{{\bf p}}$, which yields $u_{\beta({\bf
X})}=0$. Let $j'$ be any item belonging to $X_{\beta({\bf X})}$.
Since, as we have shown, for each buyer $i\in W({\bf X})$ and item
$j\in X_i$, it holds $u_{ij}\geq u_{ij'}$, it follows that
$u_i=\sum_{j\in X_i}(v_i q_j -\widetilde{p}_j)\geq d_i(v_i q_{j'}
-\widetilde{p}_{j'})\geq d_i(v_{\beta({\bf X})} q_{j'}
-\widetilde{p}_{j'})=d_i u_{\beta({\bf X})j'}=0$ and this concludes
the proof.\qed
\end{proof}

\section{Results for Generic Instances}
In this section, we show that it is hard to approximate the {\sf
RMPSD} to a factor $O(m^{1-\epsilon})$ for any $\epsilon>0$, even when considering related valuations,
whereas a simple $m$-approximation algorithm can be designed for unrelated valuations.

\subsection{Inapproximability Result}
For an integer $k>0$, we denote with $[k]$ the set $\{1,\ldots,k\}$.
Recall that an instance of the {\sf Partition} problem is made up of
$k$ strictly positive numbers $q_1,\ldots,q_k$ such that $\sum_{i\in
[k]}q_i=Q$, where $Q>0$ is an even number. It is well-known that
deciding whether there exists a subset $J\subset [k]$ such that
$\sum_{i\in J}q_i=Q/2$ is an $\sf NP$-complete problem. The
inapproximability result that we derive in this subsection is
obtained through a reduction from a specialization of the {\sf
Partition} problem, that we call {\sf Constrained Partition}
problem, which we define in the following.

An instance of the {\sf Constrained Partition} problem is made up of
an even number $k$ of non-negative numbers $q_1,\ldots,q_k$ such
that $\sum_{i\in [k]}q_i=Q$, where $Q$ is an even number and $\frac
3 2 \min_{i\in [k]}\{q_i\}\geq\max_{i\in [k]}\{q_i\}$. In this case,
we are asked to decide whether there exists a subset $J\subset [k]$,
with $|J|=k/2$, such that $\sum_{i\in J}q_i=Q/2$.

\begin{lemma}\label{hardnesslemma}
The {\sf Constrained Partition} problem is $\sf NP$-complete.
\end{lemma}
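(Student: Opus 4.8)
The plan is to establish membership in $\sf NP$ trivially --- a subset $J$ with $|J|=k/2$ is a polynomial-time checkable certificate --- and to prove $\sf NP$-hardness by a polynomial-time many-one reduction from the classical {\sf Partition} problem, whose instances consist of strictly positive numbers $q_1,\ldots,q_k$ with $\sum_{i\in[k]}q_i=Q$ even, asking whether some $S\subseteq[k]$ satisfies $\sum_{i\in S}q_i=Q/2$.

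Given such an instance, I would build a {\sf Constrained Partition} instance on $2k$ numbers by setting $a_i=q_i+2Q$ for $i\in[k]$ and $a_i=2Q$ for $i\in\{k+1,\ldots,2k\}$. Then $\sum_i a_i=Q+4kQ=Q(4k+1)$ is even because $Q$ is; the number of elements, $2k$, is even; and since $\min_i a_i=2Q$ while $\max_i a_i=2Q+\max_i q_i\le 3Q$ (as each $q_i\le Q$), the required spread condition $\frac{3}{2}\min_i a_i\ge\max_i a_i$ holds. Hence the construction always yields a legal {\sf Constrained Partition} instance, and it is plainly computable in polynomial time.

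For correctness I would argue both directions. If $S\subseteq[k]$ has $\sum_{i\in S}q_i=Q/2$ and $s=|S|$, then $J=S\cup\{k+1,\ldots,k+(k-s)\}$ has cardinality $k$ (half of $2k$) and $\sum_{i\in J}a_i=\sum_{i\in S}(q_i+2Q)+(k-s)\cdot 2Q=Q/2+2kQ=Q(4k+1)/2$, so $J$ witnesses a yes-instance. Conversely, if $J\subseteq[2k]$ with $|J|=k$ and $\sum_{i\in J}a_i=Q(4k+1)/2$, set $J_1=J\cap[k]$; since every chosen element contributes an additive $2Q$ and there are exactly $k$ of them, $\sum_{i\in J}a_i=\sum_{i\in J_1}q_i+2kQ$, so $\sum_{i\in J_1}q_i=Q/2$ and the original {\sf Partition} instance is a yes-instance.

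The one point requiring care --- the reason the reduction is not entirely immediate --- is that {\sf Constrained Partition} imposes two restrictions simultaneously: an equal-cardinality requirement on $J$ and a bounded spread on the numbers. The $k$ padding copies of $2Q$ neutralize the cardinality restriction, absorbing the discrepancy between $|S|$ and $k$, while the uniform shift by $2Q$ confines all numbers to the narrow band $[2Q,3Q]$ without disturbing the partition arithmetic, because once $|J|=k$ the $2kQ$ worth of shift contributions cancel on both sides of the target equation. I expect no further obstacle, as all the verifications above are elementary.
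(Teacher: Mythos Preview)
Your proof is correct and follows essentially the same approach as the paper: shift each of the original $k$ numbers by a large additive constant and pad with $k$ extra copies of that constant, so that the equal-cardinality constraint is absorbed by the padding and the spread condition is guaranteed by the shift. The only cosmetic difference is the choice of shift constant---you use $2Q$ while the paper uses $2q_{\max}$---and both choices work for the same reasons.
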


\begin{proof}
Let $I=\{q_1,\ldots,q_k\}$ be an instance of the {\sf Partition}
problem and denote with $q_{min}=\min_{i\in [k]}\{q_i\}$ and
$q_{max}=\max_{i\in [k]}\{q_i\}$. We construct an instance
$I'=\{q'_1,\ldots,q'_{k'}\}$ of the {\sf Constrained Partition}
problem as follows: set $k'=2k$, then, for each $i\in [k]$, set
$q'_i=q_i+2q_{max}$, while, for each $k+1\leq i\leq k'$, set
$q'_i=2q_{max}$. It is easy to see that, by construction, it holds
that $k'$ is an even number, $\frac 3 2 \min_{i\in [k']}\{q'_i\}\geq
3 q_{max}=\max_{i\in [k']}\{q'_i\}$ and that $\sum_{i\in
[k']}q'_i=\sum_{i\in [k]}q_i+2k'q_{max}=Q+2k'q_{max}$ is an even
number, so that $I'$ is a valid instance of the {\sf Constrained
Partition} problem.

In order to show the claim, we have to prove that there exists a positive answer to $I$ if and only if there exists a positive answer to $I'$.

To this aim, let $J\subset [k]$, with $\sum_{i\in J}q_i=Q/2$, be a
positive answer to $I$. Let $J'\subseteq\{k+1,\ldots,k'\}$, with
$|J'|=k-|J|$, be any set of $k-|J|$ numbers of value $2q_{max}$.
Note that, by the definition of $k'$ and the fact that $|J|<k$,
$J'\neq\emptyset$. We claim that the set $J\cup J'$ is a positive
answer to $I'$. In fact, it holds $|J\cup J'|=k$ and $\sum_{i\in
J\cup J'}q'_i=\sum_{i\in
J}(q_i+2q_{max})+2q_{max}(k-|J|)=Q/2+k'q_{max}$.

Now, let $J'\subset [k']$, with $\sum_{i\in J'}q'_i=Q/2+k'q_{max}$,
be a positive answer to $I'$. Note that, since $k'=2k$, it holds
$\sum_{i=k+1}^{k'}q'_i= k'q_{max}$. Hence, since $Q>0$, there must
exist at least one index $i\in J'$ such that $i\in [k]$. Let
$J=\{i\in J':i\in [k]\}\neq\emptyset$ be the set of all such
indexes. We claim that $J$ is a positive answer to $I$. In fact, it
holds $\sum_{i\in J}q_i=\sum_{i\in J'}q'_i-k'q_{max}=Q/2$.\qed
\end{proof}

We can now proceed to show our first inapproximability result, by
means of the following reduction. Given an integer $k\geq 3$,
consider an instance $I$ of the {\sf Constrained Partition}
problemwith $2(k-1)$ numbers $q_1,\ldots,q_{2(k-1)}$ such that
$\sum_{i=1}^{2(k-1)}q_i=Q$ and define $q_{min}=\min_{i\in
[2(k-1)]}\{q_i\}$. Remember that, by definition, $Q$ is even and it
holds $\frac 3 2 q_{min}\geq\max_{i\in [2(k-1)]}\{q_i\}$. Note that,
this last property, together with $Q\geq 2(k-1)q_{min}$, implies
that $q_j\leq\frac{3Q}{4(k-1)}<\frac Q 2$ for each $j\in [2(k-1)]$
since $k\geq 3$.

For any $\epsilon>0$, define
$\alpha=\left\lceil\frac{2}{\epsilon}\right\rceil+1$ and
$\lambda=k^\alpha$. Note that, by definition, $\lambda\geq k^2$. We
create an instance $I'$ of the {\sf RMPSD} as follows. There are
$n=5$ buyers and $m=\lambda+k-1$ items divided into four groups:
$k$ items of quality $Q$,
one item of quality $Q/2$,
$2(k-1)$ items of qualities $q_i$, with $i\in [2(k-1)]$, inherited from $I$,
and $\lambda-2k$ items of quality $\overline{q}:=\frac{q_{min}}{100}>0$.
The five buyers are such that
$v_1=2$ and $d_1=k$,
$v_2=1+\frac 1 \lambda\frac{Q-2k\overline{q}+kQ(\lambda+1)/2}{Qk+Q-2k\overline{q}+\lambda\overline{q}}$ and $d_2=\lambda$,
$v_3=1+\frac 1 \lambda$ and $d_3=k$,
$v_4=1+\frac 1 \lambda\frac{Q-k\overline{q}}{Q+(\lambda-2k)\overline{q}}$ and $d_4=\lambda-k$,
$v_5=1$ and $d_5=\lambda-2k$.

Note that it holds $v_i>v_{i+1}$ for each $i\in [4]$. In fact,
$v_4>1=v_5$, since $\lambda>2k$ and $Q\geq
2(k-1)q_{min}=200(k-1)\overline{q}>k\overline{q}$ for $k\geq 2$.
Moreover, $v_4<1+\frac 1 \lambda$, since $\lambda>k$ implies
$Q-k\overline{q}<Q+(\lambda-2k)\overline{q}$. Finally, $v_2>1+\frac
1 \lambda$, since
$\lambda>2=\frac{kQ}{k(Q-Q/2)}>\frac{kQ}{kQ-2\overline{q}}$ implies
$Q-2k\overline{q}+\frac{kQ(\lambda+1)}{2}>Qk+Q-2k\overline{q}+\lambda\overline{q}$
and $v_2<2=v_1$, since $\lambda>\frac k 2 +1$ implies
$Q-2k\overline{q}+\frac{kQ(\lambda+1)}{2}<\lambda(Qk+Q-2k\overline{q}+\lambda\overline{q})$.

Our aim is to show that, if there exists a positive answer to $I$,
then there exists an envy-free outcome for $I'$ of revenue at least
$(\lambda-2k)\overline{q}$, while, if a positive answer to $I$ does
not exists, then no envy-free outcome of revenue greater than
$6(k+3)(k-1)q_{min}$ can exist for $I'$.

\begin{lemma}\label{lemma1}
If there exists a positive answer to $I$, then there exists an envy-free outcome for $I'$ of revenue greater than $(\lambda-2k)\overline{q}$.
\end{lemma}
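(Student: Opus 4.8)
The plan is to exhibit explicitly an envy-free outcome of $I'$ whose revenue exceeds $(\lambda-2k)\overline{q}$. Let $J\subset[2(k-1)]$ be a positive answer to $I$, so that $|J|=k-1$ and $\sum_{i\in J}q_i=Q/2$ (hence also $\sum_{i\notin J}q_i=Q/2$). I would define the allocation vector $\bf X$ in which buyer $1$ gets the $k$ items of quality $Q$, buyer $3$ gets the item of quality $Q/2$ together with the $k-1$ items whose qualities are $q_i$, $i\in J$, buyer $5$ gets all the $\lambda-2k$ items of quality $\overline{q}$, and buyers $2$ and $4$ get nothing (so the $k-1$ items of qualities $q_i$, $i\notin J$, remain unsold). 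First one checks that $\bf X$ is a legal allocation vector --- the demands are met exactly, the bundles are pairwise disjoint, and $\sum_{i}|X_i|=k+k+(\lambda-2k)=\lambda\le m$ --- and that it is monotone, which is immediate since $Q>Q/2>q_i\ge q_{min}=100\overline{q}>\overline{q}$ for every $i$ (recall that $q_i<Q/2$), so that every item of a higher-value winner has quality at least that of every item of a lower-value winner.

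I would then price $\bf X$ with the scheme $\widetilde{{\bf p}}$. By Lemma~\ref{prezzinonnegativi} every sold item gets a strictly positive price; by Lemma~\ref{prezzatura} the outcome $({\bf X},\widetilde{{\bf p}})$ is feasible and no winner envies any $d_i$-subset of items; and, since buyers $2$ and $4$ hold empty bundles, they trivially have non-negative utility. By the three conditions characterizing envy-freeness listed right after its definition, it therefore remains only to verify the third one for buyers $2$ and $4$: that no set of $d_2=\lambda$ items has positive total utility for buyer $2$, and no set of $d_4=\lambda-k$ items has positive total utility for buyer $4$. For this I would spell out the closed form of $\widetilde{{\bf p}}$ (relabelling the winners $1,3,5$ as $1,2,3$): each item of quality $\overline{q}$ is priced $\overline{q}$; each item of buyer $3$ of quality $q$ is priced $(1+\tfrac1\lambda)q-\tfrac{\overline{q}}{\lambda}$; each item of quality $Q$ is priced $2Q-(1-\tfrac1\lambda)\tfrac Q2-\tfrac{\overline{q}}{\lambda}$; and each unsold item is priced $\infty$.

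Since unsold items carry an infinite price, any bundle relevant to buyer $2$ or buyer $4$ uses only sold items, of which there are exactly $\lambda$. Buyer $2$ demands exactly $\lambda$ items, so her only finite-utility bundle is the whole set of sold items; a direct computation then gives $\sum_{j\in M({\bf X})}u_{2j}=v_2\bigl(kQ+Q+(\lambda-2k)\overline{q}\bigr)-rev({\bf X},\widetilde{{\bf p}})=0$, and this is exactly where the {\sf Constrained Partition} solution enters: $\sum_{j\in X_3}q_j=Q/2+Q/2=Q$ makes the total sold quality coincide with the denominator of $v_2$, and the numerator of $v_2$ is precisely what forces the difference to vanish. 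For buyer $4$ I would first note that each item of quality $Q$ yields her a utility of order $-Q/2$, strictly below the $O(\overline{q})$-sized utilities of all other sold items (here one invokes $Q\ge 200(k-1)\overline{q}$, $\lambda\ge k^2$, and $v_4<1+\tfrac1\lambda$ to control the $O(1/\lambda)$ corrections), so her best bundle of size $\lambda-k$ is obtained by discarding exactly the $k$ items of quality $Q$, i.e. it is $X_3\cup X_5$, of size $k+(\lambda-2k)=\lambda-k$. A second computation, again using $\sum_{j\in X_3}q_j=Q$, collapses $\sum_{j\in X_3\cup X_5}u_{4j}$ to $0$. Hence $({\bf X},\widetilde{{\bf p}})$ is envy-free, and $rev({\bf X},\widetilde{{\bf p}})=(\lambda-2k)\overline{q}+\sum_{j\in X_1\cup X_3}\widetilde{p}_j>(\lambda-2k)\overline{q}$, since these prices are positive and $X_1\cup X_3\neq\emptyset$.

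The main obstacle is the pair of "collapses to $0$'' calculations for buyers $2$ and $4$: one must check that the two rather opaque expressions defining $v_2$ and $v_4$ are exactly the thresholds at which these buyers are indifferent to their best alternative bundle, and the balancing works only because buyer $3$'s total quality equals $Q$ --- i.e. only when $I$ has a solution. A secondary, delicate bookkeeping point is to keep the lower-order $O(1/\lambda)$ terms from ever overturning a dominant inequality, which is where $\lambda=k^\alpha\ge k^2$ and the ordering $v_1>v_2>v_3>v_4>v_5$ established in the construction are used.
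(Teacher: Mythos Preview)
Your proposal is correct and follows essentially the same route as the paper: the same allocation ${\bf X}$ (buyers $1,3,5$ win, buyers $2,4$ lose), the same pricing $\widetilde{{\bf p}}$, the same reduction via Lemma~\ref{prezzatura} to checking only the non-winners $2$ and $4$, and the same identification of their unique/best candidate bundles ($X_1\cup X_3\cup X_5$ for buyer~$2$, $X_3\cup X_5$ for buyer~$4$) followed by the verification that both utilities collapse to~$0$. The paper carries out the buyer-$4$ comparison by two explicit inequalities $u_{4j}<u_{4j'}$ (for $j\in X_1,\ j'\in X_3$) and $u_{4j'}<u_{4j''}$ (for $j'\in X_3,\ j''\in X_5$), whereas you argue the same ordering via magnitudes; note only that the $X_3$-item utilities are $O(Q/\lambda)$ rather than $O(\overline{q})$ in general, but this still sits comfortably above the $-Q/2+O(Q/\lambda)$ utilities of the $X_1$ items, so the conclusion is unaffected.
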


\begin{proof}
Consider the allocation vector $\bf X$ such that $X_1$ is made up of
$k$ items of quality $Q$, $X_3$ contains the item of quality $Q/2$
plus the $k-1$ items forming a positive answer to $I$, $X_5$ is made
up of the $\lambda-2k$ items of quality $\overline{q}$ and
$X_2=X_4=\emptyset$. Note that $\bf X$ is monotone. We show that the
outcome $({\bf X},\widetilde{{\bf p}})$ is envy-free.

According to the price vector $\widetilde{\bf p}$, it holds
$\widetilde{p}_j=\frac{(3\lambda+1)Q-2\overline{q}}{2\lambda}$ for
each $j\in X_1$,
$\widetilde{p}_j=\frac{(\lambda+1)q_j-\overline{q}}{\lambda}$ for
each $j\in X_3$ and $\widetilde{p}_j=\overline{q}$ for each $j\in
X_5$.

Because of Lemma \ref{prezzatura}, in order to show that $({\bf
X},\widetilde{{\bf p}})$ is envy-free, we only need to prove that,
for each buyer $i\notin W({\bf X})$ and $T\subseteq M$ with
$|T|=d_i$, it holds $\sum_{j\in T}u_{ij}\leq 0$. Note that the
buyers not belonging to $W({\bf X})$ are buyers $2$ and $4$.

For buyer $2$, since there are exactly $\lambda$ items having a
non-infinite price, it follows that $T=X_1\cup X_3\cup X_5$ is the
only set of items of cardinality $d_2$ which can give buyer $2$ a
non-negative utility. It holds
\begin{displaymath}
\begin{array}{ll}
& \displaystyle\sum_{j\in T}\left(v_2 q_j-\widetilde{p}_j\right)\\
= & \left(1+\frac 1 \lambda\frac{Q-2k\overline{q}+\frac{kQ}{2}(\lambda+1)}{Qk+Q-2k\overline{q}+\lambda\overline{q}}\right)(kQ+Q+(\lambda-2k)\overline{q})\\
& -\frac{k((3\lambda+1)Q-2\overline{q})}{2\lambda}-\frac{(\lambda+1)Q-k\overline{q}}{\lambda}-(\lambda-2k)\overline{q}\\
= & 0.
\end{array}
\end{displaymath}

For buyer $4$, for each pair of items $(j,j')$ with $j\in X_1$ and
$j'\in X_3$, it holds $u_{4j}<u_{4j'}$, while, for each pair of
items $(j',j'')$ with $j'\in X_3$ and $j''\in X_5$, it holds
$u_{4j'}<u_{4j''}$. In fact, we have
\begin{eqnarray*}
u_{4j'}-u_{4j} & = & v_4 q_j - v_4 Q -q_j\left(1+\frac 1 \lambda\right)+\frac Q 2\left(3+\frac 1 \lambda\right)\\
& > & \frac 1 \lambda\left(\frac Q 2 -q_j\right)\\
& \geq & 0,
\end{eqnarray*}
where the first inequality follows from $1<v_4<3/2$ and the second one follows from $q_j\leq Q/2$ for each $j\in X_3$; and
\begin{eqnarray*}
u_{4j''}-u_{4j'} & = & v_4 \overline{q} -\overline{q} - v_4 q_j +q_j+\frac{q_j}{\lambda}-\frac{\overline{q}}{\lambda}\\
& = & (q_j-\overline{q})\left(1+\frac 1 \lambda -v_4\right)\\
& > & 0,
\end{eqnarray*}
where the inequality follows from $v_4<1+1/\lambda$ and $q_j>\overline{q}$ for each $j\in X_3$.

Hence, the set of items of cardinality $d_4$ which gives the highest utility to buyer $4$ is $T=X_3\cup X_5$. It holds
\begin{displaymath}
\begin{array}{ll}
& \displaystyle\sum_{j\in T}\left(v_4 q_j-\widetilde{p}_j\right)\\
= & \left(1+\frac 1 \lambda\frac{Q-k\overline{q}}{Q+\overline{q}(\lambda-2k)}\right)(Q+(\lambda-2k)\overline{q})-\frac{(\lambda+1)Q-k\overline{q}}{\lambda}-(\lambda-2k)\overline{q}\\
= & 0.
\end{array}
\end{displaymath}
Thus, we can conclude that the outcome $({\bf X},\widetilde{{\bf p}})$ is envy-free and it holds $rev({\bf X},\widetilde{{\bf p}})>(\lambda-2k)\overline{q}$.\qed
\end{proof}

Now we stress the fact that, in any envy-free outcome $({\bf X},{\bf
p})$ for $I'$ such that $rev({\bf X},{\bf p})>0$, it must be
$X_1\neq\emptyset$. In fact, assume that there exists an envy-free
outcome $({\bf X},{\bf p})$ such that $X_1=\emptyset$ and
$X_i\neq\emptyset$ for some $2\leq i\leq 5$, then, since $d_1\leq
d_i$ and $v_1>v_i$ for each $2\leq i\leq 5$, it follows that there
exists a subset of $d_1$ items $T$ such that $u_1>u_i\geq 0$, which
contradicts the envy-freeness of $({\bf X},{\bf p})$. As a
consequence of this fact and of the definition of the demand vector,
it follows that each possible envy-free outcome $({\bf X},{\bf p})$
for $I'$ can only fall into one of the following three cases:
\begin{enumerate}
\item $X_1\neq\emptyset$ and $X_i=\emptyset$ for each $2\leq i\leq 5$,
\item $X_1,X_3\neq\emptyset$ and $X_2,X_4,X_5=\emptyset$,
\item $X_1,X_3,X_5\neq\emptyset$ and $X_2,X_4=\emptyset$.
\end{enumerate}
Note that, for each envy-free outcome $({\bf X},{\bf p})$ falling
into one of the first two cases, it holds $rev({\bf X},{\bf p})\leq
v_1kQ+v_3\frac 3 2 Q\leq Q(2k+3)\leq (2k+3)2(k-1)\frac 3 2
q_{min}=6(k+3)(k-1)q_{min}$. In the remaining of this proof, we will
focus only on outcomes falling into case $(3)$.

First of all, we show that, if any such an outcome is envy-free, then the sum of the qualities of the items assigned to buyer $3$ cannot exceed $Q$.

\begin{lemma}\label{lemma2}
In any envy-free outcome $({\bf X},{\bf p})$ falling into case $(3)$, it holds $\sum_{j\in X_3}q_j\leq Q$.
\end{lemma}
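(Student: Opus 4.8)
The plan is to argue by contradiction: assume $\sum_{j\in X_3}q_j>Q$ in some envy-free outcome $({\bf X},{\bf p})$ of case $(3)$, and first completely determine which items lie in $X_1,X_3,X_5$. In case $(3)$ the winners are precisely buyers $1,3,5$, so $|X_1|=|X_3|=k$, $|X_5|=\lambda-2k$, and exactly $k-1$ items are unsold. By Lemma~\ref{orderedqualities} the allocation is monotone, hence $\min_{j\in X_1}q_j\ge\max_{j\in X_3}q_j\ge\min_{j\in X_3}q_j\ge\max_{j\in X_5}q_j$. I would first observe that $X_3$ contains no item of quality $Q$: otherwise every item of $X_1$ would have quality $\ge Q$, so $X_1$ would be exactly the set of the $k$ items of quality $Q$, contradicting that $X_3$ contains one of them. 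Thus every item of $X_3$ has quality at most $Q/2$.

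Next I would use the assumption $\sum_{j\in X_3}q_j>Q$ to force the remaining structure. Since $k\ge 3$ there are $2(k-1)>k$ partition items, each of quality strictly less than $Q/2$ and all of positive quality, so any $k$ items of quality $\le Q/2$ that avoid the unique quality-$Q/2$ item have total quality strictly less than $\sum_i q_i=Q$; hence the quality-$Q/2$ item must belong to $X_3$. Then $\max_{j\in X_3}q_j=Q/2$, which by monotonicity forces $X_1$ to be exactly the set of the $k$ quality-$Q$ items. Moreover $X_3$ cannot contain any quality-$\bar q$ item: otherwise $\min_{j\in X_3}q_j=\bar q$, so by monotonicity $X_5$ would consist only of quality-$\bar q$ items, but after $X_3$ has claimed at least one of the $\lambda-2k$ quality-$\bar q$ items strictly fewer than $\lambda-2k$ remain, contradicting $|X_5|=\lambda-2k$. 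Therefore $X_3$ is the quality-$Q/2$ item together with exactly $k-1$ partition items, of total quality $P:=\sum_{j\in X_3}q_j-Q/2>Q/2$; $X_5$ is precisely the $\lambda-2k$ quality-$\bar q$ items; and the remaining $k-1$ partition items are unsold.

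To reach the contradiction I would bring in buyer $2$, who is not a winner and has $d_2=\lambda=|M({\bf X})|$. Taking $T=M({\bf X})$ in the third envy-freeness condition yields $rev({\bf X},{\bf p})=\sum_{j\in M({\bf X})}p_j\ge v_2\sum_{j\in M({\bf X})}q_j$. On the other hand I would invoke the fact — provable along the lines of Lemma~\ref{prezzatura} — that $\widetilde{\bf p}$ is revenue-maximal among envy-free pricings of a fixed allocation, so $rev({\bf X},{\bf p})\le rev({\bf X},\widetilde{\bf p})$; the nontrivial ingredient here is that $\sum_{j\in X_1}p_j\le\sum_{j\in X_1}\widetilde p_j$, which is the tight consequence of buyer $1$ not envying the (much cheaper) bundles of buyers $3$ and $5$. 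I would then evaluate $\widetilde{\bf p}$ on the structure just derived — $\widetilde p_j=\frac{(3\lambda+1)Q-2\bar q}{2\lambda}$ for $j\in X_1$, $\widetilde p_j=q_j+\frac{q_j-\bar q}{\lambda}$ for $j\in X_3$, $\widetilde p_j=\bar q$ for $j\in X_5$ — and check, using $\sum_{j\in M({\bf X})}q_j=kQ+(Q/2+P)+(\lambda-2k)\bar q$ and the explicit value of $v_2$, that $v_2\sum_{j\in M({\bf X})}q_j-rev({\bf X},\widetilde{\bf p})$ equals $\tfrac{\delta}{\lambda}$ times a strictly positive quantity, where $\delta:=P-Q/2=\sum_{j\in X_3}q_j-Q>0$. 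Since $v_2$ was defined precisely so that this difference vanishes at $\delta=0$ (cf.\ Lemma~\ref{lemma1}), for $\delta>0$ we obtain $rev({\bf X},\widetilde{\bf p})<v_2\sum_{j\in M({\bf X})}q_j\le rev({\bf X},{\bf p})\le rev({\bf X},\widetilde{\bf p})$, a contradiction; hence $\sum_{j\in X_3}q_j\le Q$.

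The main obstacle is this last step: one is forced to work with the exactly calibrated value of $v_2$ and with the exact pricing $\widetilde{\bf p}$, since the slack versions of these inequalities are far too weak — for instance, bounding $\sum_{j\in X_1}p_j$ merely by $v_1 kQ$ only yields $\sum_{j\in X_3}q_j\le kQ$, not $\le Q$. A secondary, more delicate point is the combinatorial forcing of the second paragraph, where monotonicity must be combined with careful counting of how many items of each quality the bundles $X_1,X_3,X_5$ can absorb.
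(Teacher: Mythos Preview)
Your argument is correct and is essentially the paper's own proof, just repackaged. The paper proceeds exactly as you do: it uses monotonicity to pin down $X_1,X_3,X_5$ (though less carefully than you), then derives item-by-item price bounds from the envy chain (buyer $3$ versus a non-overpriced item of $X_5$, buyer $1$ versus the quality-$Q/2$ item in $X_3$), and finally plugs these bounds into $u_2$ for $T=X_1\cup X_3\cup X_5$ to obtain $u_2>0$, contradicting envy-freeness of the non-winner~$2$. Your version simply factors this same computation as ``$rev({\bf X},{\bf p})\le rev({\bf X},\widetilde{\bf p})$'' on one side and the calibrated inequality ``$v_2\sum q_j>rev({\bf X},\widetilde{\bf p})$ when $S>Q$'' on the other; unwinding both gives precisely the paper's direct estimate.

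Two small points. First, the correct reference for the revenue-optimality of $\widetilde{\bf p}$ is Lemma~\ref{optimalwhennooverprice}, not Lemma~\ref{prezzatura}; and note that that lemma is stated only for $h$-prefixes without overpricing, whereas here the winners are $\{1,3,5\}$ and $X_5$ may contain overpriced items. You therefore need exactly the two ingredients the paper spells out explicitly: the existence of some $j'\in X_5$ with $p_{j'}\le q_{j'}$ (from $u_5\ge 0$) to bound each $p_j$ for $j\in X_3\cup X_1$ by $\widetilde p_j$, together with the aggregate bound $\sum_{j\in X_5}p_j\le v_5\sum_{j\in X_5}q_j=\sum_{j\in X_5}\widetilde p_j$. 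Second, your final linearity observation is exactly right: since the expressions are affine in $S=\sum_{j\in X_3}q_j$ with slopes $v_2$ and $1+\tfrac1\lambda$, and $v_2>1+\tfrac1\lambda$, the difference $v_2\Sigma-rev({\bf X},\widetilde{\bf p})$ equals $(S-Q)\bigl(v_2-1-\tfrac1\lambda\bigr)>0$.
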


\begin{proof}
Let $({\bf X},{\bf p})$ be an envy-free outcome falling into case
$(3)$ and assume, for the sake of contradiction, that $\sum_{j\in
X_3}>Q$. Note that, in this case, because of Lemma
\ref{orderedqualities} and the fact that no subset of $k$ items
inherited from $I$ can sum a total quality greater than $Q$, $X_3$
must contain the item of quality $Q/2$ and $X_1$ must contain all
items of quality $Q$.

By the feasibility of $({\bf X},{\bf p})$, it holds $u_5\geq 0$
which implies that there exists one item $j'\in X_5$ such that
$p_{j'}\leq q_{j'}$. Moreover, by the envy-freeness of $({\bf
X},{\bf p})$, for each $j\in X_3$, it holds
$u_{3j}=\frac{\lambda+1}{\lambda}q_j-p_j\geq
u_{3j'}=\frac{\lambda+1}{\lambda}q_{j'}-p_{j'}\geq\frac{\lambda+1}{\lambda}q_{j'}-q_{j'}=\frac{q_{j'}}{\lambda}$
which implies
$p_j\leq\frac{\lambda+1}{\lambda}q_j-\frac{q_{j'}}{\lambda}\leq\frac{\lambda+1}{\lambda}q_j-\frac{\overline{q}}{\lambda}$
for each $j\in X_3$. Let $j''$ denote the item of quality $Q/2$.
Since $j''\in X_3$, it follows that
$p_{j''}\leq\frac{\lambda+1}{\lambda}\frac Q
2-\frac{\overline{q}}{\lambda}$. Again, by the envy-freeness of
$({\bf X},{\bf p})$, for each $j\in X_1$, it holds
$u_{1j}=2Q-p_j\geq u_{1j''}=Q-p_{j''}\geq
Q-\frac{\lambda+1}{\lambda}\frac Q 2+\frac{\overline{q}}{\lambda}$
which implies $p_j\leq\frac{3Q\lambda+Q-2\overline{q}}{2\lambda}$.

Define $T=X_1\cup X_3\cup X_5$ and let us compute the utility that buyer $2$ achieves if she is assigned set $T$ such that $|T|=\lambda=d_2$. It holds
\begin{displaymath}
\begin{array}{lcl}
u_2 & = & \displaystyle\sum_{j\in T}(v_2 q_j-p_j)\\
& = & v_2\displaystyle\sum_{j\in X_5}q_j-\displaystyle\sum_{j\in X_5}p_j+v_2\displaystyle\sum_{j\in X_3}q_j-\displaystyle\sum_{j\in X_3}p_j
+v_2\displaystyle\sum_{j\in X_1}q_j-\displaystyle\sum_{j\in X_1}p_j\\
& \geq & \left(\frac 1 \lambda\frac{Q-2k\overline{q}+\frac{kQ}{2}(\lambda+1)}{Q+(\lambda-2k)\overline{q}}\right){\displaystyle\sum_{j\in X_5}q_j}+(v_2-v_3){\displaystyle\sum_{j\in X_3}q_j}+\frac{k\overline{q}}{\lambda}+k\left(v_2 Q-\frac{3Q\lambda+Q-2\overline{q}}{2\lambda}\right)\\
& > & \frac{(\lambda-2k)(Q-2k\overline{q}+\frac{kQ}{2}(\lambda+1))}{\lambda(Q+(\lambda-2k)\overline{q})}+\frac{(Qk(\lambda-1)-2\lambda\overline{q})Q}{2\lambda(Q(k+1)+(\lambda-2k)\overline{q})}+\frac{k\overline{q}}{\lambda}+k\left(v_2 Q-\frac{3Q\lambda+Q-2\overline{q}}{2\lambda}\right)\\
& = & 0,
\end{array}
\end{displaymath}
where the first inequality comes from the fact that, for each $j\in
X_1$, it holds $q_j=Q$ and
$p_j\leq\frac{3Q\lambda+Q-2\overline{q}}{2\lambda}$, the fact that
$u_5\geq 0$ implies $\sum_{j\in X_5}q_j\geq\sum_{j\in X_5}p_j$ and
the fact that
$p_j<\frac{\lambda+1}{\lambda}q_j-\frac{\overline{q}}{\lambda}$ for
each $j\in X_3$, while the second inequality comes from the fact
that $\sum_{j\in X_5}q_j\geq(\lambda-2k)\overline{q}$ and
$\sum_{j\in X_3}q_j>Q$.

Hence, since there exists a subset of $d_2$ items for which buyer
$2$ gets a strictly positive utility and buyer $2$ is not a winner
in $\bf X$, it follows that the outcome $({\bf X},{\bf p})$ cannot
be envy-free, a contradiction.\qed
\end{proof}

On the other hand, we also show that, for any envy-free outcome
$({\bf X},{\bf p})$ falling into case $(3)$, the sum of the
qualities of the items assigned to buyer $3$ cannot be smaller than
$Q$.

\begin{lemma}\label{lemma3}
In any envy-free outcome $({\bf X},{\bf p})$ falling into case $(3)$, it holds $\sum_{j\in X_3}q_j\geq Q$.
\end{lemma}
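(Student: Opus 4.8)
Lemma \ref{lemma3} is the companion of Lemma \ref{lemma2}, so the natural approach is symmetric: assume for contradiction that $\sum_{j\in X_3}q_j<Q$ in a case-$(3)$ outcome $({\bf X},{\bf p})$, and derive a contradiction by exhibiting a buyer who envies a bundle she is not assigned. The previous lemma "pushed weight away from buyer $3$" and caught buyer $2$ overperforming; here, with too little weight on buyer $3$, the asymmetry should instead be detected by buyer $4$ (whose value $v_4$ sits just below $v_3$ and whose demand $d_4=\lambda-k$ is exactly $d_3+d_5$), since the items freed up from $X_3$ become attractive to her. So first I would record what monotonicity (Lemma \ref{orderedqualities}) forces in case $(3)$: since $X_1,X_3,X_5$ are all nonempty with $v_1>v_3>v_5$, the $k$ highest-quality items must go to buyer $1$, buyer $5$ gets the $\lambda-2k$ lowest-quality items, and everything buyer $3$ holds has quality between these. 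In particular, if $\sum_{j\in X_3}q_j<Q$ then $X_3$ cannot contain the item of quality $Q/2$ together with $k-1$ of the $q_i$-items summing to $\ge Q/2$; I would pin down the structural possibilities this leaves (roughly: $X_3$ consists of $k$ items inherited from $I$, or of the $Q/2$-item plus $k-1$ inherited items whose total is $<Q/2$).

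Next I would extract the price bounds forced by envy-freeness, mirroring Lemma \ref{lemma2} but in the opposite direction. Feasibility of buyer $5$ gives some $j'\in X_5$ with $p_{j'}\le q_{j'}$, hence (now using $u_{5j}\ge u_{5j'}$ for $j\in X_5$ and then $u_{3j}\ge u_{3 j''}$ for $j''\in X_5$, $j\in X_3$) \emph{lower} bounds $p_j\ge\frac{\lambda+1}{\lambda}q_j-\overline q$-type quantities are the wrong direction; instead I want \emph{upper} bounds that are too generous. The right chain is: because buyer $3$ is a winner and the outcome is envy-free against buyer $4$, for every $j\in X_3$ and every item $j''$ of quality $\overline q$ in $X_5$ we must have (using $v_4<v_3$ and the overpricing analysis of Lemma \ref{nooverprice}) the prices on $X_1\cup X_3\cup X_5$ bounded in terms of $v_3$ and $v_1$; then I compute $u_4$ when buyer $4$ is handed the $d_4=\lambda-k$ cheapest non-$X_1$ items — concretely $T=X_3\cup X_5$ if $|X_3\cup X_5|=\lambda-k$, which holds exactly because $|X_1|=k$. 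The deficiency $\sum_{j\in X_3}q_j<Q$ should make $\sum_{j\in T}(v_4 q_j-p_j)>0$, contradicting that buyer $4\notin W({\bf X})$. The specific value $v_4=1+\frac1\lambda\cdot\frac{Q-k\overline q}{Q+(\lambda-2k)\overline q}$ is precisely the threshold at which buyer $4$'s utility on $X_3\cup X_5$ is zero when $\sum_{j\in X_3}q_j=Q$ (this is visible already in the proof of Lemma \ref{lemma1}), so any strict deficiency tips it positive.

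The main obstacle I expect is bookkeeping with the $\overline q$-correction terms: the price formula $\widetilde p_j$ and the hand-tuned values of $v_2,v_3,v_4$ all carry $\overline q$-dependent tails, and I must make sure the inequalities I chain together point the same way after the deficiency $Q-\sum_{j\in X_3}q_j$ (which is at least $\overline q$, since item qualities are integer multiples of... actually at least $q_{min}-$something, but in any case bounded below by a positive constant) is plugged in, rather than being swamped by a $\Theta(\overline q)$ slack of the wrong sign. I would handle this by first proving the clean statement at $\overline q\to 0$ (equivalently, dropping the $X_5$-block and comparing $v_4 Q$ against the prices of $X_3$), checking the sign is strict there, and then arguing the $\overline q$-terms are a lower-order perturbation controlled by the explicit choice $\overline q=q_{min}/100$ and $\lambda\ge k^2$. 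A secondary subtlety is that case $(3)$ only guarantees $X_5\ne\emptyset$, not $|X_5|=\lambda-2k$; but the demand constraint $d_5=\lambda-2k$ forces $|X_5|\in\{0,\lambda-2k\}$, so nonemptiness gives $|X_5|=\lambda-2k$ and likewise $|X_3|=k$, which is exactly what makes $|X_3\cup X_5|=d_4$ and lets buyer $4$ be the witness.
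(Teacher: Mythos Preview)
Your overall architecture is exactly the paper's: assume $\sum_{j\in X_3}q_j<Q$, take the non-winner buyer~$4$ as witness, use the test bundle $T=X_3\cup X_5$ (which has cardinality $d_4=\lambda-k$ because $|X_3|=k$ and $|X_5|=\lambda-2k$), derive upper bounds on the prices of items in $T$, and compute $\sum_{j\in T}(v_4 q_j-p_j)>0$, contradicting envy-freeness.

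The one real slip is in the middle paragraph, where you derive the chain $u_{3j}\geq u_{3j'}$ for $j\in X_3$, $j'\in X_5$ and then dismiss it as giving ``lower bounds $p_j\geq\ldots$'' of the wrong sign. That is backwards: since buyer~$3$ is a winner, envy-freeness says $u_{3j}=v_3 q_j-p_j\geq u_{3j'}\geq v_3 q_{j'}-q_{j'}=q_{j'}/\lambda$, which rearranges to the \emph{upper} bound $p_j\leq\frac{\lambda+1}{\lambda}q_j-\frac{\overline q}{\lambda}$. This is precisely what the paper uses, and it is exactly what you need. The vague alternative chain you propose (``envy-free against buyer~$4$ \ldots\ prices bounded in terms of $v_3$ and $v_1$'') is not needed and not clearly stated. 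Once the correct upper bound is in hand, the computation of $u_4(T)$ is a direct substitution; no $\overline q\to 0$ limiting argument is required, and no quantitative lower bound on the deficiency $Q-\sum_{j\in X_3}q_j$ is needed either: strict inequality $\sum_{j\in X_3}q_j<Q$ already makes the final expression strictly positive, because $v_4<v_3$ strictly.
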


\begin{proof}
Let $({\bf X},{\bf p})$ be an envy-free outcome falling into case $(3)$ and assume, for the sake of contradiction, that $\sum_{j\in X_3}<Q$.

By the feasibility of $({\bf X},{\bf p})$, it holds $u_5\geq 0$
which implies that there exists one item $j'\in X_5$ such that
$p_{j'}\leq q_{j'}$. Moreover, by the envy-freeness of $({\bf
X},{\bf p})$, for each $j\in X_3$, it holds
$u_{3j}=\frac{\lambda+1}{\lambda}q_j-p_j\geq
u_{3j'}=\frac{\lambda+1}{\lambda}q_{j'}-p_{j'}\geq\frac{\lambda+1}{\lambda}q_{j'}-q_{j'}=\frac{q_{j'}}{\lambda}$
which implies
$p_j\leq\frac{\lambda+1}{\lambda}q_j-\frac{q_{j'}}{\lambda}\leq\frac{\lambda+1}{\lambda}q_j-\frac{\overline{q}}{\lambda}$
for each $j\in X_3$.

Define $T=X_3\cup X_5$ and let us compute the utility that buyer $4$ achieves if she is assigned set $T$ such that $|T|=\lambda-k=d_4$. It holds
\begin{displaymath}
\begin{array}{lcl}
u_4 & = & \displaystyle\sum_{j\in T}(v_4 q_j-p_j)\\
& = & v_4\displaystyle\sum_{j\in X_5}q_j-\displaystyle\sum_{j\in X_5}p_j+v_4\displaystyle\sum_{j\in X_3}q_j-\displaystyle\sum_{j\in X_3}p_j\\
& \geq & \left(\frac 1 \lambda\frac{Q-k\overline{q}}{Q+(\lambda-2k)\overline{q}}\right){\displaystyle\sum_{j\in X_5}q_j}+(v_4-v_3){\displaystyle\sum_{j\in X_3}q_j}+\frac{k\overline{q}}{\lambda}\\
& > & \frac{(\lambda-2k)(Q-k\overline{q})\overline{q}}{\lambda(Q+(\lambda-2k))\overline{q}}
-\frac{Q(\lambda-k)\overline{q}}{\lambda(Q+(\lambda-2k)\overline{q})}+\frac{k\overline{q}}{\lambda}\\
& = & 0
\end{array}
\end{displaymath}
where the first inequality comes from the fact that $u_5\geq 0$
implies $\sum_{j\in X_5}q_j\geq\sum_{j\in X_5}p_j$ and the fact that
$p_j\leq\frac{\lambda+1}{\lambda}q_j-\frac{\overline{q}}{\lambda}$
for each $j\in X_3$, while the second inequality comes from the fact
that $\sum_{j\in X_5}q_j\geq (\lambda-2k)\overline{q}$ and
$\sum_{j\in X_3}q_j<Q$.

Hence, since there exists a subset of $d_4$ items for which buyer
$4$ gets a strictly positive utility and buyer $4$ is not a winner
in $\bf X$, it follows that the outcome $({\bf X},{\bf p})$ cannot
be envy-free, a contradiction.\qed
\end{proof}

As a consequence of Lemmas \ref{lemma2} and \ref{lemma3}, it follows
that there exists an envy-free outcome $({\bf X},{\bf p})$ falling
into case $(3)$ only if $\sum_{j\in X_3}q_j=Q$. Since, as we have
already observed, in such a case the item of quality $Q/2$ has to
belong to $X_3$, it follows that there exists an envy-free outcome
$({\bf X},{\bf p})$ falling into case $(3)$ only if there are $k-1$
items inherited from $I$ whose sum is exactly $Q/2$, that is, only
if $I$ admits a positive solution.

Any envy-free outcome not falling into case $(3)$ can raise a
revenue of at most $6(k+3)(k-1)q_{min}$. Hence, if there exists a
positive answer to $I$, then, by Lemma~\ref{lemma1}, there exists a
solution to $I'$ of revenue greater than $(\lambda-2k)\overline{q}$,
while, if there is no positive answer to $I$, then there exists no
solution to $I'$ of revenue more than $6(k+3)(k-1)q_{min}$.

Thus, if there exists an $r$-approximation algorithm for the {\sf
RMPSD} with $r\leq
\frac{(\lambda-2k)q_{min}}{600(k+3)(k-1)q_{min}}$, it is then
possible to decide in polynomial time the {\sf Constrained
Partition} problem, thus implying {\sf P} = {\sf NP}. Since, by the
definition of $\alpha$,
$\frac{\lambda-2k}{600(k+3)(k-1)}=O\left(k^{\alpha-2}\right)=O\left(m^{1-2/\alpha}\right)$
and $m^{1-\epsilon}<m^{1-2/\alpha}$, the following theorem holds.

\begin{theorem}\label{maininapprox}
For any $\epsilon>0$, the {\sf RMPSD} cannot be approximated to a factor $O(m^{1-\epsilon})$ unless ${\sf P}={\sf NP}$.
\end{theorem}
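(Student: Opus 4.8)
The proof is essentially an assembly of the machinery built up above: the reduction from {\sf Constrained Partition} (which is {\sf NP}-complete by Lemma~\ref{hardnesslemma}) together with the revenue bounds of Lemmas~\ref{lemma1}, \ref{lemma2} and~\ref{lemma3} already pins the optimum of $I'$ into two well-separated windows depending on whether $I$ is a yes- or a no-instance. So the plan is: (i) record the two windows, (ii) compute the ratio between them, and (iii) express that ratio as a function of $m$ and optimise $\alpha$ against $\epsilon$.

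For step (i), fix $\epsilon>0$ and construct $I'$ from an instance $I$ of {\sf Constrained Partition} exactly as described, with $\alpha=\lceil 2/\epsilon\rceil+1$ and $\lambda=k^\alpha$. If $I$ is a yes-instance, Lemma~\ref{lemma1} already exhibits an envy-free outcome of revenue strictly greater than $(\lambda-2k)\overline{q}=(\lambda-2k)q_{min}/100$, so $\mathrm{OPT}(I')>(\lambda-2k)q_{min}/100$. If $I$ is a no-instance, I would argue that every envy-free outcome of positive revenue must fall into case $(1)$ or case $(2)$: by the remark preceding Lemma~\ref{lemma2} any positive-revenue envy-free outcome has $X_1\neq\emptyset$ and hence lands in one of the three enumerated cases, while Lemmas~\ref{lemma2} and~\ref{lemma3} force $\sum_{j\in X_3}q_j=Q$ in any case-$(3)$ outcome; since the item of quality $Q/2$ must then lie in $X_3$ (by monotonicity, Lemma~\ref{orderedqualities}, and the bound $q_j<Q/2$ on the items inherited from $I$), this would require $k-1$ of the $q_i$ to sum to exactly $Q/2$, contradicting that $I$ is a no-instance. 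Thus only cases $(1)$ and $(2)$ remain, and for these the revenue is at most $v_1kQ+v_3\frac{3}{2}Q\leq 6(k+3)(k-1)q_{min}$ as already observed in the text, so $\mathrm{OPT}(I')\leq 6(k+3)(k-1)q_{min}$.

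For steps (ii) and (iii): an $r$-approximation algorithm with $r$ at most $\frac{(\lambda-2k)q_{min}/100}{6(k+3)(k-1)q_{min}}=\frac{\lambda-2k}{600(k+3)(k-1)}$ would, on input $I'$, return a solution whose revenue strictly exceeds $6(k+3)(k-1)q_{min}$ in the yes-case and is at most $6(k+3)(k-1)q_{min}$ in the no-case, hence decide {\sf Constrained Partition} in polynomial time, forcing {\sf P} $=$ {\sf NP}. Finally I would translate the threshold into $m$: since $m=\lambda+k-1=\Theta(k^\alpha)$ we have $k=\Theta(m^{1/\alpha})$, so $\frac{\lambda-2k}{600(k+3)(k-1)}=\Theta(k^{\alpha-2})=\Theta(m^{1-2/\alpha})$; and because $\alpha>2/\epsilon$ we have $1-2/\alpha>1-\epsilon$, so $m^{1-2/\alpha}$ asymptotically dominates $c\,m^{1-\epsilon}$ for every fixed constant $c$. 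Consequently no $O(m^{1-\epsilon})$-approximation can exist unless {\sf P} $=$ {\sf NP}, which is the claim.

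I expect the assembly itself to be routine; the genuinely delicate part is not in this final step but is already absorbed into Lemmas~\ref{lemma1}--\ref{lemma3}, namely checking that the specific algebraic values chosen for $v_2$ and $v_4$ make the critical utilities of the non-winning buyers $2$ and $4$ evaluate to exactly $0$, so that the envy threshold lands precisely at $\sum_{j\in X_3}q_j=Q$ and the equivalence with {\sf Constrained Partition} is exact. Given those lemmas, the remaining argument is just the bookkeeping sketched above.
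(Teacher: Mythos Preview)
Your proposal is correct and follows essentially the same approach as the paper: you assemble Lemmas~\ref{lemma1}--\ref{lemma3} with the case enumeration to separate the yes/no revenue windows, compute the gap ratio $\frac{\lambda-2k}{600(k+3)(k-1)}$, and then convert it to $\Theta(m^{1-2/\alpha})$ via $m=\Theta(k^\alpha)$ and $\alpha>2/\epsilon$, exactly as the paper does. Your justification that the $Q/2$ item must lie in $X_3$ when $\sum_{j\in X_3}q_j=Q$ is slightly terse (one also needs that no $k$ of the inherited $q_i$'s can sum to $Q$, which follows since the remaining $k-2$ would then sum to $0$), but this matches the paper's own level of detail at that step.
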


We stress that this inapproximability result heavily relies on the presence
of two useless buyers, namely buyers $2$ and $4$, who cannot be
winners in any envy-free solution. This situation suggests that better approximation guarantees may be possible for proper instances, as we will show in the next section.

\subsection{The Approximation Algorithm}
In this subsection, we design a simple $m$-approximation algorithm
for the generalization of the {\sf RMPSD} in which the buyers have
unrelated valuations. The inapproximability result given in Theorem
\ref{maininapprox} shows that, asymptotically speaking, this is the
best approximation one can hope for unless {\sf P} $=$ {\sf NP}.

For each $i\in N$, let $T_i=\textrm{argmax}_{T\subseteq
M:|T|=d_i}\left\{\sum_{j\in T}v_{ij}\right\}$ be the set of the
$d_i$ best items for buyer $i$ and define $R_i=\left(\sum_{j\in
T_i}v_{ij}\right)/d_i$. Let $i^*$ be the index of the buyer with the
highest value $R_i$. Consider the algorithm {\sf best} which returns
the outcome $(\overline{{\bf X}},\overline{\bf p})$ such that
$\overline{X}_{i^*}=T_{i^*}$, $\overline{X}_i=\emptyset$ for each
$i\neq i^*$, $\overline{p}_j=R_{i^*}$ for each $j\in T_{i^*}$ and
$\overline{p}_j=\infty$ for each $j\notin T_{i^*}$. It is easy to see that the computational complexity of Algorithm {\sf best} is $O(nm)$.

\begin{theorem}\label{apxgeneral}
Algorithm {\sf best} returns an $m$-approximate solution for the {\sf RMPSD} with unrelated valuations.
\end{theorem}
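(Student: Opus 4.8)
The plan is to show two things: first, that the outcome $(\overline{\bf X},\overline{\bf p})$ returned by Algorithm {\sf best} is envy-free, and second, that its revenue is at least $\frac{1}{m}$ of the optimal revenue. For envy-freeness, I would verify the three conditions listed after Definition~1. Condition~(1): buyer $i^*$ receives bundle $T_{i^*}$ at uniform price $R_{i^*}=\left(\sum_{j\in T_{i^*}}v_{i^*j}\right)/d_{i^*}$, so $u_{i^*}=\sum_{j\in T_{i^*}}v_{i^*j}-d_{i^*}R_{i^*}=0\ge 0$, and all other buyers have empty bundles with utility $0$. Condition~(2): for $i^*$, every item $j\in T_{i^*}$ has price $R_{i^*}$, and every item $j'\notin T_{i^*}$ has price $\infty$, so $u_{i^*j}=v_{i^*j}-R_{i^*}\ge -\infty = u_{i^*j'}$ trivially. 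Condition~(3): for any $i\neq i^*$ and any $T\subseteq M$ with $|T|=d_i$, if $T$ contains an item outside $T_{i^*}$ then $\sum_{j\in T}u_{ij}=-\infty<0$; otherwise $T\subseteq T_{i^*}$, so $|T_{i^*}|=d_{i^*}\ge d_i$ and $\sum_{j\in T}u_{ij}=\sum_{j\in T}(v_{ij}-R_{i^*})\le d_i(\max_{j}v_{ij}) - d_i R_{i^*}$; here I use $R_{i^*}\ge R_i \ge \frac{1}{d_i}\sum_{j\in T}v_{ij}$, which gives $\sum_{j\in T}v_{ij}\le d_i R_i\le d_i R_{i^*}$, hence $\sum_{j\in T}u_{ij}\le 0$. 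Thus $(\overline{\bf X},\overline{\bf p})$ is envy-free.

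Next I bound the revenue. Let $({\bf X}^*,{\bf p}^*)$ be an optimal envy-free outcome. If $M({\bf X}^*)=\emptyset$ the bound is trivial, so assume some buyer is a winner. For each winner $i\in W({\bf X}^*)$, feasibility gives $u_i=\sum_{j\in X^*_i}(v_{ij}-p^*_j)\ge 0$, hence $\sum_{j\in X^*_i}p^*_j\le\sum_{j\in X^*_i}v_{ij}\le\sum_{j\in T_i}v_{ij}=d_i R_i\le d_i R_{i^*}$, where the middle inequality holds because $T_i$ is the best $d_i$-subset for buyer $i$. Summing over all winners,
\begin{displaymath}
rev({\bf X}^*,{\bf p}^*)=\sum_{i\in W({\bf X}^*)}\sum_{j\in X^*_i}p^*_j\le R_{i^*}\sum_{i\in W({\bf X}^*)}d_i = R_{i^*}\cdot|M({\bf X}^*)|\le R_{i^*}\cdot m,
\end{displaymath}
since the bundles $X^*_i$ are disjoint and $\sum_{i\in N}|X^*_i|\le m$. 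On the other hand, $rev(\overline{\bf X},\overline{\bf p})=\sum_{j\in T_{i^*}}R_{i^*}=d_{i^*}R_{i^*}\ge R_{i^*}$ because $d_{i^*}\ge 1$. Combining, $rev({\bf X}^*,{\bf p}^*)\le m\cdot R_{i^*}\le m\cdot rev(\overline{\bf X},\overline{\bf p})$, which is the claimed $m$-approximation. (In fact the argument even yields the slightly stronger bound $rev({\bf X}^*,{\bf p}^*)\le (m/d_{i^*})\cdot rev(\overline{\bf X},\overline{\bf p})$.)

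I do not expect any serious obstacle here; the argument is entirely elementary. The one point requiring a little care is the envy-freeness check for losing buyers $i\neq i^*$ in condition~(3): one must observe that the only $d_i$-subsets giving finite utility are subsets of $T_{i^*}$, and then use the maximality in the definition of $R_i$ together with the choice of $i^*$ as the maximizer of $R_i$. The revenue bound is then a one-line disjointness-plus-supply count.
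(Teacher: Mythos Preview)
Your proposal is correct and follows essentially the same approach as the paper: both verify envy-freeness by showing that any losing buyer $i$ choosing a finite-price bundle $T\subseteq T_{i^*}$ satisfies $\sum_{j\in T}v_{ij}\le d_iR_i\le d_iR_{i^*}$, and both bound the optimal revenue by $\sum_{i\in W({\bf X})}d_iR_i\le mR_{i^*}\le m\cdot rev(\overline{\bf X},\overline{\bf p})$. Your write-up is slightly more explicit in checking all three envy-freeness conditions and in handling the $d_i>d_{i^*}$ case, but the ideas are identical.
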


\begin{proof}
It is easy to see that the outcome $(\overline{{\bf
X}},\overline{\bf p})$ returned by Algorithm {\sf best} is
feasible.In order to prove that it is also envy-free, we just need
to show that, for each buyer $i\neq i^*$ with $d_i\leq d_{i^*}$ and
each $T_i\subseteq T_{i^*}$ of cardinality $d_i$, it holds
$\sum_{j\in T_i}(v_{ij}-p_j)\leq 0$. Assume, for the sake of
contradiction, that there exists a set $T_i$ of cardinality $d_i$
such that $\sum_{j\in T_i}(v_{ij}-\overline{p}_j)>0$.

We obtain $0<\sum_{j\in T_i}(v_{ij}-\overline{p}_j)=\sum_{j\in
T_i}v_{ij}-d_i R_{i^*}\leq d_i R_i-d_i R_{i^*}=d_i(R_i-R_{i^*})$
which implies $R_i>R_{i^*}$, a contradiction. Hence,
$(\overline{{\bf X}},\overline{\bf p})$ is envy-free.

As to the approximation guarantee, note that $rev(\overline{{\bf
X}},\overline{\bf p})=d_{i^*}R_{i^*}\geq R_{i^*}$. The maximum
possible revenue achievable by any outcome $({\bf X},{\bf p})$, not
even an envy-free one, is at most $\sum_{i\in N}\sum_{j\in
X_i}v_{ij}\leq\sum_{i\in W({\bf X})}(d_i R_i)\leq m R_{i^*}$, which
yields the claim.\qed
\end{proof}

\section{Results for Proper Instances}

Given a proper instance $I=({\bf V},{\bf D},{\bf Q})$, denote with
$\delta$ the number of different values in $\bf V$ and, for
each $k\in [\delta]$, let
$A_k\subseteq N$ denote the set of buyers with the $k$th highest value and $v(A_k)$ denote
the value of all buyers in $A_k$. For $k\in [\delta]$, define $A_{\leq
k}=\bigcup_{h=1}^k A_h$, $A_{\geq k}=\bigcup_{h=k}^\delta A_h$,
$A_{>k}=A_{\geq k}\setminus A_k$ and $A_{<k}=A_{\leq k}\setminus
A_k$, while, for each subset of buyers $A\subseteq N$, define
$d(A)=\sum_{i\in A}d_i$. Let $\delta^*\in [\delta]$ be the minimum
index such that $d(A_{\leq \delta^*})>m$ and let
$\widetilde{A}\subset A_{\delta^*}$ be a subset of buyers in
$A_{\delta^*}$ such that $$\widetilde{A}=\textrm{argmax}_{A\subset
A_{\delta^*}:d(A)+d(A_{< \delta^*})\leq m}\left\{d(A)\right\}.$$ In
other words $\widetilde{A}$ is the subset of buyers in
$A_{\delta^*}$ that feasibly extends $A_{< \delta^*}$ (i.e., such
that the sum of the requested items of buyers in $A_{< \delta^*}
\cup \widetilde{A}$ is at most $m$) and maximizes the number of
allocated items.

Note that any instance $I$ for which $\delta^*$ does not exist can
be suitably extended with a dummy buyer $n+1$, such that
$v_{n+1}<v_n$ and $d_{n+1}=m+1$, which is equivalent in the sense
that it does not change the set of envy-free outcomes of $I$. Hence,
in this section, we will always assume that $\delta^*$ is
well-defined for each proper instance of the {\sf RMPSD}.

For our purposes we need to break ties among values of the buyers in
$A_{\delta^*}$ in such a way that each buyer in $\widetilde{A}$
comes before any buyer in $A_{\delta^*}\setminus \widetilde{A}$. In
order to achieve this task, we need to explicitly compute the set of
buyers $\widetilde{A}$. Such a computation can be done by reducing
this problem to the {\sf knapsack} problem. It is easy to see that,
in this case, the well-known pseudo-polynomial time algorithm for
{\sf knapsack} is polynomial in the dimensions of $I$, as $d_i\leq
m$ for every $i\in N$.

Because of the above discussion, from now on we can assume that ties
among values of the buyers in $A_{\delta^*}$ are broken in such a
way that each buyer in $\widetilde{A}$ comes before any buyer in
$A_{\delta^*}\setminus \widetilde{A}$. For each $k\in [\delta^*]$,
define
\begin{displaymath}
\alpha(k)=\left\{
\begin{array}{ll}
\max\{i\in A_k\} & \textrm{  if }k\in [\delta^*-1],\\
\max\{i\in \widetilde{A}\} & \textrm{  if }k=\delta^*.
\end{array}\right.
\end{displaymath}
By the definition of $\delta^*$ and $\widetilde{A}$ and by the tie
breaking rule imposed on the buyers in $A_{\delta^*}$, it follows
that $\sum_{i=1}^{\alpha(k)}d_i\leq m$ for each $k\in [\delta^*]$.

We say that an allocation vector $\bf X$ is an $h$-prefix of $I$, with $h\in [\alpha(\delta^*)]$, if $\bf X$ is monotone and $i\in W({\bf X})$ if and only if $i\in [h]$.

\subsection{Computing an $h$-Prefix of $I$ of Maximum Revenue}
Let $\bf X$ be an $h$-prefix of $I$.
We show that $({\bf X},\widetilde{{\bf p}})$ is an envy-free outcome.

\begin{lemma}\label{prezzienvyfree}
The outcome $({\bf X},\widetilde{{\bf p}})$ is envy-free.
\end{lemma}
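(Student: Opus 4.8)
The plan is to leverage Lemma~\ref{prezzatura}, which already establishes that $({\bf X},\widetilde{{\bf p}})$ is feasible and that every winner $i\in W({\bf X})=[h]$ satisfies $u_i\geq\sum_{j\in T}u_{ij}$ for every $T\subseteq M$ with $|T|=d_i$. By the characterization of envy-freeness stated right after the definition (the three numbered conditions), what remains is precisely to verify condition (3): for every non-winner $i\notin W({\bf X})$, i.e.\ every $i>h$, and every $T\subseteq M$ with $|T|=d_i$, it holds $\sum_{j\in T}u_{ij}\leq 0$. So the whole proof reduces to bounding the utility a buyer $i>h$ could extract from the best bundle of $d_i$ priced items.

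First I would observe that it suffices to consider the buyer $i=h+1$ (the highest-value non-winner), since any $i>h+1$ has $v_i\leq v_{h+1}$ and $d_i\geq$ the demand relevant here only in the sense that larger demands force including more of the same (finitely priced) items; more carefully, one shows that if $h+1$ cannot profitably deviate then neither can any later buyer, using that the items with finite price are exactly $X_1\cup\cdots\cup X_h$ and that $\widetilde{p}_j=\infty$ otherwise, so a buyer with $d_i>\sum_{k\le h}d_k$ would be forced to ``buy'' an infinitely priced item — and by the definition of an $h$-prefix together with $h\le\alpha(\delta^*)$ and Lemma~\ref{noholes}'s underlying counting, the number of finitely priced items is $\sum_{k=1}^h d_k$, which is $<d_{h+1}$ when $h+1$ is in the ``blocking'' position. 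Actually the cleanest route: for a non-winner $i$, either $d_i>\sum_{k\in[h]}d_k$, in which case every $T$ with $|T|=d_i$ contains an item of price $\infty$ and trivially $\sum_{j\in T}u_{ij}=-\infty\le 0$; or $d_i\le\sum_{k\in[h]}d_k$, in which case the best $T$ consists of $d_i$ finitely priced items, and we must estimate its utility.

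For the remaining case, the key computation mirrors the last part of the proof of Lemma~\ref{prezzatura}. Take $j'\in X_h$ (the bundle of the last winner), so that $u_{hj'}=0$ by the definition of $\widetilde{{\bf p}}$. For any finitely priced item $j$ with $b(j)=\ell\le h$, I would show $u_{ij}\le u_{hj'}=0$: indeed from the inequalities derived in Lemma~\ref{prezzatura} we have $u_{\ell j}\geq u_{\ell j'}$, hence $u_{\ell j'}\leq u_{\ell j}$; chaining the monotone-allocation estimates down to buyer $h$ gives $u_{hj}\le u_{hj'}=0$ for every finitely priced $j$, and then since $v_i\le v_h$ we get $u_{ij}=v_iq_j-\widetilde p_j\le v_hq_j-\widetilde p_j=u_{hj}\le 0$ for every finitely priced item $j$ (this last step uses $q_j>0$). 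Summing over any $T$ of $d_i$ such items yields $\sum_{j\in T}u_{ij}\le 0$, which is condition (3).

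The main obstacle I anticipate is not any single inequality but making the reduction ``it suffices to check the top non-winner / finitely priced items'' fully rigorous, i.e.\ handling the demand bookkeeping so that one correctly concludes that a non-winner is either forced onto an $\infty$-priced item or else restricted to items $j$ with $u_{ij}\le 0$. Once that case split is in place, everything else is a direct consequence of Lemma~\ref{prezzatura}'s monotonicity estimates together with $v_i\le v_h$ and $q_j>0$, so no genuinely new calculation is needed beyond what Lemma~\ref{prezzatura} already contains.
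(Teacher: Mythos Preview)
Your approach is essentially the same as the paper's: invoke Lemma~\ref{prezzatura} to cover the winners, then for any non-winner $i>h$ use $v_i\le v_h$ together with $u_{hj'}\le 0$ for all finitely priced $j'$ (which follows from $u_{hj}=0$ on $X_h$ and the winner-side envy-freeness of Lemma~\ref{prezzatura}) to conclude $u_{ij'}\le 0$ itemwise. The case split on whether $d_i$ exceeds $\sum_{k\le h}d_k$ and the talk of ``chaining through $\ell$'' are unnecessary detours---once every item has $u_{ij}\le 0$ (trivially so for the $\infty$-priced ones), any $T$ sums to something non-positive regardless of $d_i$---but they do no harm.
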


\begin{proof}
Since $\bf X$ is monotone, by exploiting Lemma \ref{prezzatura}, we
only need to prove that for each buyer $i\notin W({\bf X})$ and set
$T\subseteq M$ of cardinality $d_i$, it holds $\sum_{j\in
T}u_{ij}\leq 0$. Note that $i\notin W({\bf X})$ if and only if
$i>h$.

For each $i>h$, it holds $v_i\leq v_h$. Moreover, for each $j$ such
that $b(j)=h$, it holds $u_{hj}=0$. Since, because of Lemma
\ref{prezzatura}, $u_{hj}\geq u_{hj'}$ for any item $j'\in M({\bf
X})$, it follows that $u_{hj'}=v_h q_{j'}-\widetilde{p}_{j'}\leq 0$
for each $j'\in M({\bf X})$. Hence, for each $j'\in M({\bf X})$, it
holds $u_{ij'}=v_i q_{j'}-\widetilde{p}_{j'}\leq v_h
q_{j'}-\widetilde{p}_{j'}\leq 0$ and this concludes the proof.\qed
\end{proof}

Given an allocation vector $\bf X$, for each $i\in [\delta]$, denote
with $M_i({\bf X})=\{j\in M({\bf X}):v_{b(j)}=v(A_i)\}$ the set of
items allocated to the buyers with the $i$th highest value in ${\bf
V}$. Recall that, since $\bf X$ is an $h$-prefix of $I$, it holds
$\beta({\bf X})=h$. The following lemma gives a lower bound on the
revenue generated by the outcome $({\bf X},\widetilde{{\bf p}})$.

\begin{lemma}\label{revenueenvyfree}
$rev({\bf X},\widetilde{{\bf p}})\geq v_h\sum_{j\in M_{h}({\bf X})}q_j$.
\end{lemma}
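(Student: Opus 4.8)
The plan is to evaluate the price $\widetilde{p}_j$ explicitly for every item $j$ allocated to a buyer whose value equals $v_h$, show that it is exactly $v_h q_j$, and then use positivity of all the other prices to turn the partial sum over $M_h({\bf X})$ into a lower bound on the total revenue.

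First I would unpack $\widetilde{p}_j$ for an item $j\in M_h({\bf X})$, i.e.\ an item with $b(j)\neq 0$ and $v_{b(j)}=v_h$. Since $\bf X$ is an $h$-prefix we have $\beta({\bf X})=h$, so the defining formula reads $\widetilde{p}_j=v_{b(j)}q_j-\sum_{k=b(j)+1}^{h}(v_{k-1}-v_k)q_{f(k)}$. Because the buyers are ordered in non-increasing order of value, every index $k$ with $b(j)\le k\le h$ satisfies $v_h\le v_k\le v_{b(j)}=v_h$, hence $v_k=v_h$; in particular $v_{k-1}-v_k=0$ for all $k\in\{b(j)+1,\dots,h\}$ (the sum being empty when $b(j)=h$). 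Therefore $\widetilde{p}_j=v_{b(j)}q_j=v_h q_j$ for every $j\in M_h({\bf X})$.

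Summing over $j\in M_h({\bf X})$ gives $\sum_{j\in M_h({\bf X})}\widetilde{p}_j = v_h\sum_{j\in M_h({\bf X})}q_j$. Finally, since $M_h({\bf X})\subseteq M({\bf X})$ and, by Lemma~\ref{prezzinonnegativi}, $\widetilde{p}_j>0$ for every item $j$, we obtain
\[
rev({\bf X},\widetilde{{\bf p}})=\sum_{j\in M({\bf X})}\widetilde{p}_j\ \ge\ \sum_{j\in M_h({\bf X})}\widetilde{p}_j\ =\ v_h\sum_{j\in M_h({\bf X})}q_j,
\]
which is the claimed bound.

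There is no real obstacle here: the only point requiring (mild) care is the argument that every buyer index between $b(j)$ and $h=\beta({\bf X})$ has value exactly $v_h$, so that all telescoping differences in the price formula vanish, together with the degenerate case $b(j)=h$ where the correction sum is empty.
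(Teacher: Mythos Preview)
Your proof is correct and follows exactly the approach of the paper, which simply writes $rev({\bf X},\widetilde{{\bf p}})\geq\sum_{j\in M_h({\bf X})}\widetilde{p}_j=v_h\sum_{j\in M_h({\bf X})}q_j$ in one line. You have merely unpacked the equality $\widetilde{p}_j=v_hq_j$ by noting that all winners between $b(j)$ and $h$ share the same value $v_h$, so the correction sum in the definition of $\widetilde{p}_j$ vanishes, and invoked Lemma~\ref{prezzinonnegativi} explicitly to justify dropping the remaining (positive) terms.
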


\begin{proof}
By the definition of $\widetilde{\bf p}$, it follows that $rev({\bf X},\widetilde{{\bf p}})\geq\sum_{j\in M_h({\bf X})}\widetilde{p}_j=v_h\sum_{j\in M_h({\bf X})}q_j$.\qed
\end{proof}

We now prove a very important result stating that the price vector
$\widetilde{\bf p}$ is the best one can hope for when overpricing is
not allowed. Such a result, of independent interest, plays a crucial
role in the proof of the approximation guarantee of the algorithm we
define in this section.

\begin{lemma}\label{optimalwhennooverprice}
Let $\bf X$ be an $h$-prefix of $I$. Then $({\bf X},\widetilde{{\bf p}})$ is an optimal envy-free outcome when overpricing is not allowed.
\end{lemma}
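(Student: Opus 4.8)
The plan is to fix an $h$-prefix $\bf X$ and an arbitrary envy-free outcome $({\bf X},{\bf p})$ that uses no overpricing (i.e. $u_{b(j)j}\ge 0$ for every $j\in M({\bf X})$), and show $rev({\bf X},{\bf p})\le rev({\bf X},\widetilde{\bf p})$. Since both outcomes share the same allocation $\bf X$, it suffices to prove $p_j\le\widetilde p_j$ for every sold item $j\in M({\bf X})$. For items with $b(j)=0$ this is trivial, so fix $j$ with $b(j)=i$ for some $i\in W({\bf X})=[h]$.

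First I would establish a telescoping upper bound on $p_j$ by walking down the chain of winners. Because $({\bf X},{\bf p})$ is envy-free, for consecutive winners $k$ and $k+1$ and any items $j_k\in X_k$, $j_{k+1}\in X_{k+1}$ we have $u_{k,j_k}\ge u_{k,j_{k+1}}$ and (using that $\bf X$ is monotone so $q_{j_k}\ge q_{f(k+1)}$) one gets $p_{j_k}-p_{j_{k+1}}\le (v_k-v_{k+1})q_{f(k+1)}$ after choosing $j_{k+1}=f(k+1)$; more generally comparing item $j\in X_i$ against $f(k)$ for $k>i$ and summing over $k=i+1,\dots,h$ yields $p_j\le v_i q_j-\sum_{k=i+1}^{h}(v_{k-1}-v_k)q_{f(k)}+\big(\text{base term for }k=h\big)$. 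The base term is controlled by the no-overpricing hypothesis: for some $j'\in X_h$, $u_{h,j'}\ge 0$, i.e. $p_{j'}\le v_h q_{j'}$, and then by envy-freeness within buyer $h$, $p_{f(h)}\le v_h q_{f(h)}$. Feeding this in, the telescoped bound becomes exactly $p_j\le v_i q_j-\sum_{k=i+1}^{\beta({\bf X})}(v_{k-1}-v_k)q_{f(k)}=\widetilde p_j$ (recalling $\beta({\bf X})=h$), which is precisely the definition of $\widetilde p_j$.

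Once $p_j\le\widetilde p_j$ holds for every $j\in M({\bf X})$, we get $rev({\bf X},{\bf p})=\sum_{j\in M({\bf X})}p_j\le\sum_{j\in M({\bf X})}\widetilde p_j=rev({\bf X},\widetilde{\bf p})$. Combined with Lemma~\ref{prezzatura} (which guarantees $({\bf X},\widetilde{\bf p})$ is itself feasible and envy-free for the winners) and Lemma~\ref{prezzienvyfree} (full envy-freeness of $({\bf X},\widetilde{\bf p})$ for an $h$-prefix), this shows $({\bf X},\widetilde{\bf p})$ is an optimal envy-free outcome among those with allocation $\bf X$ and no overpricing, which is the claim.

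The main obstacle I expect is the bookkeeping in the telescoping step: one must be careful that the comparison chain only uses items $f(k)$ that actually exist and that the monotonicity inequality $q_{j}\ge q_{f(k)}$ (for $k>i$, $j\in X_i$) is applied with the correct direction, and one must verify the induction base at buyer $h$ really collapses the sum to end at $\beta({\bf X})$ rather than at $h-1$. A secondary subtlety is confirming that ``no overpricing'' is exactly the hypothesis needed to pin down the last term $v_h q_{f(h)}$ — without it, $p_{f(h)}$ could be larger and $\widetilde p_j$ would no longer be an upper bound; Lemma~\ref{nooverprice} tells us overpricing can occur only at buyer $\beta({\bf X})=h$, so forbidding it there is the minimal assumption, and the argument should make this dependence transparent.
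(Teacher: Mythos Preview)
Your approach is essentially the same as the paper's: both show $p_j\le\widetilde p_j$ for every sold item by a backward chain from buyer $h$, using the no-overpricing constraint $p_{f(h)}\le v_h q_{f(h)}$ as the anchor and envy-freeness comparisons $u_{k,j}\ge u_{k,f(k+1)}$ to step from level $k$ to $k+1$; the paper phrases this as a one-step backward induction, you phrase it as a telescoping sum, but the computation is identical. One slip to fix: the intermediate inequality you state, $p_{j_k}-p_{j_{k+1}}\le (v_k-v_{k+1})q_{f(k+1)}$, is not what envy-freeness gives---the correct step is $p_{j_k}-p_{f(k+1)}\le v_k\bigl(q_{j_k}-q_{f(k+1)}\bigr)$, and it is only after summing these over $k=i,\dots,h-1$ and appending $p_{f(h)}\le v_h q_{f(h)}$ that the $(v_{k-1}-v_k)q_{f(k)}$ terms appear by regrouping.
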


\begin{proof}
It is easy to see that the price vector $\widetilde{\bf p}$ does not
overprice any item in $M({\bf X})$. For any envy-free outcome $({\bf
X},{\bf p})$, we show by backward induction that $p_j\leq
\widetilde{p}_j$ for each $j\in M({\bf X})$.

As a base case, for all $j\in M_h({\bf X})$, it holds $p_j\leq v_h q_j=\widetilde{p}_j$ because ${\bf p}$ cannot overprice any item.

For the inductive step, consider an item $j$ such that $b(j)=i<h$
and assume the claim true for each item $j'$ such that $b(j')>i$. By
the envy-freeness of $({\bf X},{\bf p})$, it holds
$u_{ij}-u_{ij'}\geq 0$ for $j'=f(i+1)$. This implies
\begin{eqnarray*}
0 & \leq & u_{ij}-u_{ij'}\\
& = & v_i q_j-p_j-v_i q_{j'}+p_{j'}\\
& \leq & v_i q_j-p_j-v_i q_{j'}+\widetilde{p}_{j'},
\end{eqnarray*}
where the last inequality comes from the inductive hypothesis. Hence, we can conclude that
\begin{eqnarray*}
p_j & \leq & v_i(q_j-q_{j'})+\widetilde{p}_{j'}\\
& = & v_i(q_j-q_{j'})+v_{b(j')}q_{j'}-\sum_{k=b(j')+1}^h((v_{k-1}-v_k)q_{f(k)})\\
& = & v_i(q_j-q_{f(i+1)})+v_{i+1}q_{f(i+1)}-\sum_{k=i+2}^h((v_{k-1}-v_k)q_{f(k)})\\
& = & v_i q_j-\sum_{k=i+1}^h((v_{k-1}-v_k)q_{f(k)})\\
& = & \widetilde{p}_j,
\end{eqnarray*}
where the second equality comes from $j'=f(i+1)$ and $b(j')=i+1$.
This completes the induction and shows the claim.\qed
\end{proof}

We design a polynomial time algorithm {\sf ComputePrefix} which,
given a proper instance $I$ and a value $h\in[\alpha(\delta^*)]$,
outputs the $h$-prefix ${\bf X}^*_h$ such that the outcome $({\bf
X}_h^*,\widetilde{{\bf p}})$ achieves the highest revenue among all
possible $h$-prefixes of $I$.

Recall that, by definition of $h$-prefixes of $I$, the set of buyers
whose demand is to be satisfied is exactly characterized. Moreover,
once fixed a set of items which exactly satisfies the demands of the
considered buyers, by the monotonicity of $h$-prefixes of $I$, we
know exactly which items must be assigned to each buyer. Hence, in
this setting, our task becomes that of determining the set of items
maximizing the value $rev({\bf X},\widetilde{{\bf p}})$.

To this aim, we first show that this problem reduces to that of determining, for each $i\in [h]$, the item $f(i)$. In fact, it holds
\begin{eqnarray*}
rev({\bf X},\widetilde{{\bf p}}) & = & \sum_{j\in M({\bf X})}\widetilde{p}_j\\
& = & \sum_{i\in [h]}\sum_{j\in X_i}\left(v_i q_j -\sum_{k=i+1}^h\left((v_{k-1}-v_k)q_{f(k)}\right)\right)\\
& = & \sum_{i\in [h]}\left(v_i\sum_{j\in X_i}q_j\right)-\sum_{i\in [h]}\left(d_i\sum_{k=i+1}^h\left((v_{k-1}-v_k)q_{f(k)}\right)\right)\\
& = & \underbrace{\sum_{i\in [h]}\left(v_i\sum_{j\in X_i}q_j\right)}_{T_1}-\underbrace{\sum_{i=2}^h\left(\left((v_{i-1}-v_i)q_{f(i)}\right)\sum_{k=1}^{i-1}d_k\right)}_{T_2}.
\end{eqnarray*}
Note that only those items $j$ such that $j=f(i)$ for some $i\in
[h]$ contribute to the term $T_2$ and that the per quality
contribution of each item to the term $T_1$ is always strictly
positive. This implies that, once fixed all the items $j$ such that
$j=f(i)$ for each $i\in [h]$, the remaining $d_i-1$ items to be
assigned to buyer $i$ in each optimal outcome are exactly the items
$j+1,\ldots,j+d_i-1$.

Because of the above discussion, we are now allowed to concentrate
only on the problem of determining the set of best-quality items
assigned to each buyer in $[h]$ in an optimal envy-free outcome. Let
us denote with $r_{ij}$ the maximum revenue which can be achieved by
an envy-free outcome in which the best-quality item of the first $i$
buyers have been chosen among the first $j$ ones. Hence, $r_{ij}$ is
defined for $0\leq i\leq h$ and $\sum_{k=1}^{i-1}d_k+1\leq j\leq
m+1-\sum_{k=i}^h d_k$ and has the following expression:
\begin{displaymath}
r_{ij}=\left\{
\begin{array}{ll}
0 & \textrm{  if }i=0,\\
t_i q_j+\displaystyle\sum_{k=j+1}^{j+d_i-1}v_i q_k & \textrm{  if }i>0\wedge j=\displaystyle\sum_{k=1}^{i-1}d_k+1,\\
\max\{r_{i-1,j-1}+t_i q_j;r_{i,j-1}\}+\displaystyle\sum_{k=j+1}^{j+d_i-1}v_i q_k & \textrm{  if }i>0 \wedge j>\displaystyle\sum_{k=1}^{i-1}d_k+1,
\end{array}\right.
\end{displaymath}
where $t_i=v_i -(v_{i-1}-v_i)\sum_{k=1}^{i-1}d_k$ is the
contribution that item $f(i)$ gives to the revenue per each unit of
quality. Clearly, by definition, $r_{h,m+1-d_h}$ gives the maximum
revenue which can be achieved by an envy-free outcome $({\bf
X},\widetilde{{\bf p}})$ such that $W({\bf X})=[h]$. Such a
quantity, as well as the allocation vector ${\bf X}_h^*$ realizing
it, can be computed by the following dynamic programming algorithm
of complexity $O(mh)$.

\

\noindent ${\sf ComputePrefix}({\sf input}\textrm{: instance }I,\textrm{ integer }h,\ {\sf output}\textrm{: allocation vector }{\bf X}^*_h)$:\\
{\bf for each} $i=0,\dots,h$ {\bf do} $r_{ij}:=0$;\\
{\bf for each} $i=1,\ldots,h$ {\bf do}\\
$|$\hspace{0.5cm}$r_{ij}:=t_i\cdot q_j$ where $j=\sum_{k=1}^{i-1}d_k+1$;\\
$|$\hspace{0.5cm}$f_i:=j$;\\
{\bf for each} $i=1,\ldots,h$ {\bf do}\\
$|$\hspace{0.5cm}{\bf for each} $j=\sum_{k=1}^{i-1}d_k+2,\ldots,m+1-\sum_{k=i}^h d_k$ {\bf do}\\
$|$\hspace{0.5cm}$|$\hspace{0.5cm}{\bf if} $r_{i,j-1}\geq r_{i-1,j-1}+t_i\cdot q_j$ {\bf then};\\
$|$\hspace{0.5cm}$|$\hspace{0.5cm}$|$\hspace{0.5cm}$r_{ij}:=r_{i,j-1}$;\\
$|$\hspace{0.5cm}$|$\hspace{0.5cm}{\bf else}\\
$|$\hspace{0.5cm}$|$\hspace{0.5cm}$|$\hspace{0.5cm}$r_{i,j}:=r_{i-1,j-1}+t_i\cdot q_j$;\\
$|$\hspace{0.5cm}$|$\hspace{0.5cm}$|$\hspace{0.5cm}$f_i:=j$;\\
{\bf for each} $i=1,\ldots,h$ {\bf do}\\
$|$\hspace{0.5cm}$X_i:=\{f_i,f_i+1,\ldots,f_i+d_i-1\}$;\\
{\bf return} ${\bf X}^*_h=(X_1,\ldots,X_h)$;

\

Let ${\cal X}(h)$ be the set of all possible $h$-prefixes of $I$. As
a consequence of the analysis carried out in this subsection, we can
claim the following result.

\begin{lemma}\label{subroutine}
For each $h\in [\alpha(\delta^*)]$, the $h$-prefix of $I$ ${\bf
X}^*_h$ such that $rev({\bf X}^*_h,\widetilde{{\bf p}})=\max_{{\bf
X}\in {\cal X}(h)}\{rev({\bf X},\widetilde{{\bf p}})\}$ can be
computed in time $O(mh)$.
\end{lemma}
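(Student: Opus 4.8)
The plan is to turn the structural analysis preceding the statement into a correctness proof for {\sf ComputePrefix} and then to count its operations. Recall that an $h$-prefix ${\bf X}$ is completely determined by its winner set, which is forced to be $[h]$, together with the best-quality items $f(1),\ldots,f(h)$ of the $h$ winners; by monotonicity these must satisfy $f(i)+d_i\le f(i+1)$ for $i\in[h-1]$ and $f(h)+d_h-1\le m$, the bundles being then the consecutive blocks $X_i=\{f(i),\ldots,f(i)+d_i-1\}$. Moreover, as derived above, $rev({\bf X},\widetilde{{\bf p}})=T_1-T_2=\sum_{i\in[h]}\big(t_iq_{f(i)}+v_i\sum_{\ell=1}^{d_i-1}q_{f(i)+\ell}\big)$, a sum of terms each depending on a single $f(i)$. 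Hence computing ${\bf X}^*_h$ is exactly the problem of choosing a feasible tuple $(f(1),\ldots,f(h))$ maximising this separable objective, and since $h\le\alpha(\delta^*)$ guarantees $\sum_{i\in[h]}d_i\le m$, at least one feasible tuple exists.

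I would solve this packing problem by the left-to-right dynamic program behind {\sf ComputePrefix}. First I would fix the intended semantics of the table entry: $r_{ij}$ is the maximum of $\sum_{k=1}^{i}\big(t_kq_{f(k)}+v_k\sum_{\ell=1}^{d_k-1}q_{f(k)+\ell}\big)$ over all tuples $f(1)<\cdots<f(i)$ that respect the packing constraints and have $f(i)\le j$. Then I would show, by induction on $i$ and, for fixed $i$, on $j$, that the recurrence defining $r_{ij}$ computes exactly this value: the base cases are $i=0$ (no buyer, value $0$) and $j=\sum_{k=1}^{i-1}d_k+1$, where the constraints force $f(i)=j$ and, recursively, force buyers $1,\ldots,i-1$ into the tightest left-packed configuration; the inductive step splits on whether an optimal tuple has $f(i)<j$, in which case the value is inherited from the entry with one smaller column, or $f(i)=j$, in which case it is obtained by appending buyer $i$'s contribution $t_iq_j+\sum_{\ell=1}^{d_i-1}v_iq_{j+\ell}$ to an optimal placement of buyers $1,\ldots,i-1$ whose bundles are contained in $\{1,\ldots,j-1\}$. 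The crucial point is that the index ranges $\sum_{k=1}^{i-1}d_k+1\le j\le m+1-\sum_{k=i}^{h}d_k$ are exactly the positions at which $f(i)$ can sit inside some feasible tuple, so that combining sub-solutions along the recurrence never creates overlapping or mis-ordered blocks. The optimal revenue is then the entry $r_{h,\,m+1-d_h}$, and the optimal tuple — hence ${\bf X}^*_h$ — is recovered by the standard backtracking over the table.

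For the running time, the table has one row per $i\in\{0,1,\ldots,h\}$ and at most $m+1$ relevant columns per row, i.e.\ $O(mh)$ entries; after an $O(m)$ precomputation of the prefix sums of $(q_1,\ldots,q_m)$, each entry — including the block-tail sums $\sum_{\ell=1}^{d_i-1}v_iq_{j+\ell}$ — is evaluated in $O(1)$, and the backtracking costs $O(h)$, for an overall $O(mh)$ bound. Since, by Lemma~\ref{prezzienvyfree}, $({\bf X},\widetilde{{\bf p}})$ is envy-free for every $h$-prefix ${\bf X}$, the quantity $rev({\bf X},\widetilde{{\bf p}})$ is well defined over all of ${\cal X}(h)$, and by construction ${\bf X}^*_h$ attains its maximum over ${\cal X}(h)$, which is precisely the assertion of the lemma.

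I expect the only genuine difficulty to be the correctness induction for the recurrence — in particular the bookkeeping identifying the table's index ranges with exactly the feasible positions of the leading items, so that $r_{ij}$ neither admits an infeasible (overlapping or mis-ordered) block placement nor omits a feasible one. Once that invariant is in place, the reduction to a separable objective and the $O(mh)$ operation count are routine.
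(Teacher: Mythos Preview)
Your proposal is correct and follows essentially the same approach as the paper, which simply presents the $T_1-T_2$ revenue decomposition, the recurrence for $r_{ij}$, and the pseudocode of {\sf ComputePrefix}, then states the lemma as a direct consequence of that discussion. Your explicit semantics for $r_{ij}$, the correctness induction, and the use of prefix sums of the $q_j$'s to guarantee $O(1)$ per-entry evaluation are welcome additions that the paper leaves implicit.
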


\subsection{The Approximation Algorithm}
Our approximation algorithm $\sf Prefix$ for proper instances
generates a set of prefixes of $I$ for which it computes the
allocation of items yielding maximum revenue by exploiting the
algorithm {\sf ComputePrefix} as a subroutine. Then, it returns the
solution with the highest revenue among them.

\

\noindent ${\sf Prefix}({\sf input}\textrm{: instance }I,\ {\sf output}\textrm{: allocation vector }{\bf X}^*)$:\\
$opt:=\emptyset$; $value:=-1$;\\
compute $\widetilde{A}$;\\
reorder the buyers in such a way that each $i\in\widetilde{A}$ comes before any $i'\in A_{\delta^*}\setminus \widetilde{A}$;\\
{\bf for each} $h=1,\dots,\alpha(\delta^*)$ {\bf do}\\
$|$\hspace{0.5cm}${\bf X}^*_h:={\sf ComputePrefix}(I,h)$;\\
$|$\hspace{0.5cm}{\bf if} $rev({\bf X}^*_h,\widetilde{{\bf p}})>value$ {\bf then}\\
$|$\hspace{0.5cm}$|$\hspace{0.5cm}$opt:={\bf X}^*_h$; $value:=rev({\bf X}^*_h,\widetilde{{\bf p}})$;\\
{\bf for each} $k=0,\dots,\delta^*-1$ {\bf do}\\
$|$\hspace{0.5cm}{\bf for each} $i\in A_{k+1}$ {\bf do}\\
$|$\hspace{0.5cm}$|$\hspace{0.5cm}reorder the buyers in $A_{k+1}$ in such a way that $i$ is the first buyer in $A_{k+1}$;\\
$|$\hspace{0.5cm}$|$\hspace{0.5cm}{\bf if} $d(A_{\leq k})+d_i\leq m$ {\bf then} ${\bf X}^*_k:={\sf ComputePrefix}(I,|A_{\leq k}|+1)$;$\ \ \ \ \ \ \ \ \ \ \ \ \ \ \ \ (\dag)$\\
$|$\hspace{0.5cm}$|$\hspace{0.5cm}{\bf if} $rev({\bf X}^*_k,\widetilde{{\bf p}})>value$ {\bf then}\\
$|$\hspace{0.5cm}$|$\hspace{0.5cm}$|$\hspace{0.5cm}$opt:={\bf X}^*_k$; $value:=rev({\bf X}^*_k,\widetilde{{\bf p}})$;\\
{\bf return} $opt$;

\

It is easy to see that the computational complexity of Algorithm $\sf Prefix$ is $O(n^3m)$. As a major positive contribution of this work, we show that it approximates the {\sf RMPSD} to a factor $2$ on proper instance.

\begin{theorem}\label{apxcontinuous}
The approximation ratio of Algorithm $\sf Prefix$ is $2$ when applied to proper instances.
\end{theorem}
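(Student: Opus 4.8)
The plan is to fix an optimal envy-free outcome $({\bf X}^{opt},{\bf p}^{opt})$ and show that the revenue collected by Algorithm $\sf Prefix$ is at least half of $rev({\bf X}^{opt},{\bf p}^{opt})$. The starting point is Lemma~\ref{orderedqualities}, which tells us that ${\bf X}^{opt}$ is monotone, so its winners form (after the appropriate tie-breaking among buyers sharing the same value) a set of the form $[h^{opt}]$ for some $h^{opt}\le\alpha(\delta^*)$; more precisely, by Lemma~\ref{noholes} applied to a proper instance, the winner set cannot skip buyers, so the winners are an initial segment relative to some tie-breaking rule. The subtlety is that the tie-breaking rule used inside ${\bf X}^{opt}$ among buyers in the critical class $A_{\delta^*}$ need not coincide with the one we impose in $\sf Prefix$; this is precisely why the algorithm also loops over every buyer $i\in A_{k+1}$ and re-runs {\sf ComputePrefix} with $i$ placed first (the lines marked $(\dag)$). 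So the first step is to argue that, whatever initial segment of winners $W({\bf X}^{opt})$ is, the algorithm has considered an $h$-prefix ${\bf X}$ with exactly the same winner set, and by Lemma~\ref{subroutine} has found the revenue-maximizing allocation among all such prefixes; by Lemma~\ref{prezzienvyfree} the outcome $({\bf X},\widetilde{{\bf p}})$ is envy-free.

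The second step is the core comparison. Let $h=\beta({\bf X}^{opt})$ and let $v_h$ be the value of the ``last'' winner; this is $v(A_{\delta^*})$ or $v(A_{\delta^*-1})$ depending on how far the optimum extends. Split the optimum's revenue as $rev({\bf X}^{opt},{\bf p}^{opt})=\sum_{j\in M_h({\bf X}^{opt})}p^{opt}_j+\sum_{j\in M({\bf X}^{opt})\setminus M_h({\bf X}^{opt})}p^{opt}_j$, i.e.\ into the revenue from items sold to the lowest-value winners versus the revenue from items sold to strictly-higher-value winners. For the first term: by Lemma~\ref{nooverprice}, no item sold to a non-maximal-index buyer is overpriced, hence for $j\in M_h({\bf X}^{opt})$ sold to a buyer $i$ with $v_i=v_h$ that is not $\beta({\bf X}^{opt})$ we have $p^{opt}_j\le v_i q_j=v_h q_j$, and the total over $M_h$ is therefore at most $v_h\sum_{j\in M_h({\bf X}^{opt})}q_j$ — and by Lemma~\ref{revenueenvyfree} this quantity is a \emph{lower} bound on $rev({\bf X},\widetilde{{\bf p}})$ for the prefix ${\bf X}$ with the same winner set (matching $M_h$), hence a lower bound on the revenue of $\sf Prefix$. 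For the second term, consider the allocation ${\bf X}'$ obtained from ${\bf X}^{opt}$ by keeping only the strictly-higher-value winners, i.e.\ winner set $[h']$ with $v_{h'}>v_h$; this is again an $h'$-prefix the algorithm has examined. By Lemma~\ref{optimalwhennooverprice}, $\widetilde{{\bf p}}$ is the pointwise-largest non-overpricing envy-free price vector on such a prefix, and since ${\bf p}^{opt}$ restricted to the items of these higher buyers is non-overpricing (Lemma~\ref{nooverprice}) and envy-free, the revenue that ${\bf p}^{opt}$ extracts from $M({\bf X}^{opt})\setminus M_h({\bf X}^{opt})$ is at most $rev({\bf X}',\widetilde{{\bf p}})$, which is again dominated by the value returned by $\sf Prefix$. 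Combining, $rev({\bf X}^{opt},{\bf p}^{opt})\le 2\cdot rev_{\sf Prefix}$, giving the factor $2$.

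The step I expect to be the main obstacle is the bookkeeping around tie-breaking and the precise definition of the prefix ${\bf X}'$ in the second term: one must be careful that the set of items $M({\bf X}^{opt})\setminus M_h({\bf X}^{opt})$ allocated to the strictly-higher buyers is exactly the item set that the prefix ${\bf X}'$ would be forced (by monotonicity) to allocate to those same buyers, so that Lemma~\ref{optimalwhennooverprice} applies verbatim and the price comparison is item-by-item legitimate. In particular one needs that in ${\bf X}^{opt}$ the higher-value winners together hold the $\sum_{i:v_i>v_h}d_i$ highest-quality items — which follows from monotonicity (Lemma~\ref{orderedqualities}) — so that ${\bf X}'$ and ${\bf X}^{opt}$ agree on those buyers' bundles. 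Once this alignment is established, the inequalities $p^{opt}_j\le\widetilde{p}_j$ on those items are exactly the content of Lemma~\ref{optimalwhennooverprice}, and the rest is the summation above. A minor additional point is handling the degenerate cases where ${\bf X}^{opt}$ has no higher-value winners (then only the first term is present and the ratio is trivially met) or where $\delta^*$ required the dummy-buyer extension, both of which are covered by the remarks preceding this subsection.
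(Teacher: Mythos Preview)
Your decomposition into the ``last-value'' term and the ``higher-value'' term is exactly the paper's argument for its Case~(1), where the optimal allocation ${\bf X}^{opt}$ \emph{is} an $h$-prefix of $I$; there the two bounds you give (feasibility on $M_h$, Lemma~\ref{optimalwhennooverprice} on the truncated prefix ${\bf X}'$) are correct and match the paper. The gap is your reduction to that case. Lemma~\ref{noholes} does \emph{not} say that the winner set of an envy-free outcome is an initial segment; it only says that if buyer $i$ is skipped then the \emph{total demand} of lower-indexed winners is strictly below $d_i$. So a buyer with a large demand can be skipped while several lower-value buyers with small demands are winners, and this is not excluded by properness (which only guarantees that $i$ together with the higher-value buyers of demand at most $d_i$ fit in $m$ items). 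Consequently the statement ``the algorithm has considered an $h$-prefix ${\bf X}$ with exactly the same winner set'' can simply fail, and nothing in your outline covers that situation.

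The paper handles this as a separate Case~(2), and it is the longer part of the proof. Let $i^*$ be the first skipped buyer. If the revenue coming from the winners $1,\ldots,i^*-1$ already accounts for half of $rev({\bf X}^{opt},{\bf p}^{opt})$, your truncation argument via Lemma~\ref{optimalwhennooverprice} still applies. Otherwise one must bound the revenue from winners with index $>i^*$ by the revenue of some prefix the algorithm actually examines, and here the argument splits further: if an $i^*$-prefix exists, Lemma~\ref{noholes} gives $d_{i^*}>\sum_{k>i^*,k\in W}d_k$, so the single bundle $X'_{i^*}$ in the best-quality $i^*$-prefix dominates (in total quality) everything allocated beyond $i^*$, and Lemma~\ref{revenueenvyfree} finishes; if no $i^*$-prefix exists, one either finds an earlier winner $i'<i^*$ with $d_{i'}>d_{i^*}$ and repeats the quality-domination argument at $i'$, or---and this is where properness is genuinely used---one invokes the definition of proper to guarantee that the prefix $A_{<k}\cup\{i^*\}$ fits in $m$ items, which is precisely one of the prefixes enumerated at line~$(\dag)$. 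So the $(\dag)$ loop is not merely a tie-breaking device for $A_{\delta^*}$ as you suggest; it is essential for this last subcase. Your proposal needs this entire case analysis to be complete.
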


\begin{proof}
Let $I$ be a proper instance and let $({\bf X},{\bf p})$ be its
optimal envy-free outcome. We denote with $rev({\sf Prefix})$ the
revenue of the outcome returned by Algorithm $\sf Prefix$. The proof
is divided into two cases:

\

{\noindent\bf Case (1):} ${\bf X}$ is an $h$-prefix of $I$ for some
$h\in [\alpha(\delta^*)]$.

\noindent Since ${\bf X}$ is an $h$-prefix of $I$, the outcome
$({\bf X}^*_h,\widetilde{{\bf p}})$ has to be considered by
algorithm $\sf Prefix$ as a candidate solution. It follows that
$rev({\sf Prefix})\geq rev({\bf X}^*_h,\widetilde{{\bf p}})\geq
v_h\sum_{j\in M_h({\bf X}^*_h)}q_j$ by the definition of algorithm
$\sf Prefix$ and by Lemma \ref{revenueenvyfree}.

Now, if $\sum_{j\in M_h({{\bf X}})}{p}_j\geq\frac 1 2 rev({{\bf
X}},{{\bf p}})$, the claim directly follows since, by the
feasibility of $({{\bf X}},{{\bf p}})$, it holds $\sum_{j\in
M_h({{\bf X}})}{p}_j\leq v_h\sum_{j\in M_h({{\bf X}})}q_j\leq
rev({\sf Prefix})$. Hence, assume that $\sum_{j\in M_h({{\bf
X}})}{p}_j<\frac 1 2 rev({{\bf X}},{{\bf p}})$.

Define $i'=\max\{i\in N:v_i>v_h\}$ (note that $i'$ is well-defined
because of the assumption) and ${\bf X}'$ as the $i'$-prefix of $I$
such that $X'_i={X}_i$ for each $i\in [i']$. By
Lemma~\ref{nooverprice}, it follows that $({\bf X'},{{\bf p}})$ is
an outcome without overpricing. Because of our assumption, it holds
$rev({\bf X'},{{\bf p}})>\frac 1 2 rev({{\bf X}},{{\bf p}})$ and, by
Lemma~\ref{optimalwhennooverprice}, it also holds $rev({\bf
X'},\widetilde{{\bf p}})\geq rev({\bf X'},{{\bf p}})$. Moreover,
since ${\bf X'}$ is an $i'$-prefix of $I$, by the definition of
algorithm $\sf Prefix$ and by Lemma~\ref{subroutine}, it holds
$rev({\sf Prefix})\geq rev({\bf X}^*_{i'},\widetilde{{\bf p}})\geq
rev({\bf X'},\widetilde{{\bf p}})$ which yields the claim.

\

{\noindent\bf Case (2):} ${\bf X}$ is not an $h$-prefix of $I$ for
any $h\in [\alpha(\delta^*)]$.

\noindent Let $i^*=\min\{i\in N:i\notin W({\bf X})\}$. Since ${\bf
X}$ is not an $h$-prefix of $I$ for any $h\in [\alpha(\delta^*)]$,
it follows that $\beta({{\bf X}})>i^*$.

Assume that $\sum_{i=1}^{i^*-1}\sum_{j\in {X}_i}{p}_j\geq\frac 1 2
rev({{\bf X}},{{\bf p}})$ and define ${\bf X}'$ as the
$(i^*-1)$-prefix of $I$ such that $X'_i={X}_i$ for each $i\in
[i^*-1]$ (note that our assumption implies that $(i^*-1)$-prefixes
of $I$ do exist). By Lemma~\ref{nooverprice}, it follows that $({\bf
X'},{{\bf p}})$ is an outcome without overpricing. Because of our
assumption, it holds $rev({\bf X'},{{\bf
p}})=\sum_{i=1}^{i^*-1}\sum_{j\in {X}_i}{p}_j\geq\frac 1 2 rev({{\bf
X}},{{\bf p}})$ and, by Lemma~\ref{optimalwhennooverprice}, it also
holds $rev({\bf X'},\widetilde{{\bf p}})\geq rev({\bf X'},{{\bf
p}})$. Moreover, since ${\bf X'}$ is an $(i^*-1)$-prefix of $I$, by
the definition of algorithm $\sf Prefix$ and by
Lemma~\ref{subroutine}, it holds $rev({\sf Prefix})\geq rev({\bf
X}^*_{i^*-1},\widetilde{{\bf p}})\geq rev({\bf X'},\widetilde{{\bf
p}})$ which yields the claim.

Hence, from now on, we assume that $\sum_{i=1}^{i^*-1}\sum_{j\in
{X}_i}{p}_j<\frac 1 2 rev({{\bf X}},{{\bf p}})$.

If there does not exist an $i^*$-prefix of $I$, then,
$\sum_{i>i^*:i\in W({\bf X})}d_i<d_{i^*}$. Assume that there exists
a buyer $i'\in W({\bf X})$ such that $i'<i^*$ and $d_{i'}>d_{i^*}$.
Clearly, $i'$-prefixes of $I$ do exist. Define ${\bf X}'$ as the
$i'$-prefix of $I$ such that $X'_i={X}_i$ for each $i\in [i']$. By
the definition of algorithm $\sf Prefix$ and by
Lemmas~\ref{subroutine} and \ref{revenueenvyfree}, it holds
$rev({\sf Prefix})\geq rev({\bf X}^*_{i'},\widetilde{{\bf p}})\geq
rev({\bf X'},\widetilde{{\bf p}})\geq v_{i'}\sum_{j\in X'_{i'}}q_j$.
On the other hand, it holds $\sum_{i>i^*}\sum_{j\in
X_i}q_j<d_{i^*}q_{max}$, where
$q_{max}=\max\left\{q_j:j\in\bigcup_{i>i^*}X_i\right\}$. Moreover,
$d_{i^*}q_{max}<d_{i'}\sum_{j\in X'_{i'}}q_j$ since $d_{i'}>d_{i^*}$
and $\bf X$ is monotone. Hence, we have
\begin{eqnarray*}
\frac 1 2 rev({\bf X},{\bf p}) & < & \sum_{i>i^*:i\in W({\bf X})}\sum_{j\in X_i}p_j\\
& \leq & \sum_{i>i^*:i\in W({\bf X})}\left(v_i\sum_{j\in X_i}q_j\right)\\
& \leq & \sum_{i>i^*:i\in W({\bf X})}\left(v_{i^*}\sum_{j\in X_i}q_j\right)\\
& < & v_{i^*}d_{i^*}q_{max}\\
& < & v_{i'}d_{i'}q_{max}\\
& \leq & v_{i'}\sum_{j\in X_{i'}}q_j,
\end{eqnarray*}
which yields the claim.

Assume that there does not exist any buyer $i'\in W({\bf X})$ such
that $i'<i^*$ and $d_{i'}>d_{i^*}$. Let $k$ be the index such that
$i^*\in A_k$. In this case, by the definition of proper instances,
it holds that the allocation vector ${\bf X}'$ which allocates the
best-quality items to the buyers in $A_{<k}$ and to $i^*$ is an
$h$-prefix of $I$ considered by Algorithm {\sf Prefix} at line
$(\dag)$ for which it holds $\sum_{j\in
X'_{i^*}}q_j\geq\sum_{i>i^*:i\in W({\bf X})}\sum_{j\in X_i}q_j$.
Hence, we have
\begin{eqnarray*}
\frac 1 2 rev({\bf X},{\bf p}) & < & \sum_{i>i^*:i\in W({\bf X})}\sum_{j\in X_i}p_j\\
& \leq & \sum_{i>i^*:i\in W({\bf X})}\left(v_i\sum_{j\in X_i}q_j\right)\\
& \leq & \sum_{i>i^*:i\in W({\bf X})}\left(v_{i^*}\sum_{j\in X_i}q_j\right)\\
& < & v_{i^*}d_{i^*}q_{max}\\
& \leq & v_{i^*}\sum_{j\in X'_{i^*}}q_j,
\end{eqnarray*}
which yields the claim.

If $i^*$-prefixes of $I$ do exist, define $H=\{i\in W({\bf
X}):v_i=v_{i^*}\}$ and let $i'=\min\{i:i\in H\}$ if
$H\neq\emptyset$, otherwise set $i'=i^*$. Moreover, define
$i''=\min\{i\in W({\bf X}):i>i^*\}$.

If $v_{i^*}>v_{\beta({{\bf X}})}$, then, by Lemma \ref{noholes}, it
holds $d_{i^*}>\sum_{i>i^*:i\in W({{\bf X}})} d_i$. Define ${\bf
X}'$ as the $i^*$-prefix of $I$ such that
$X'_i=\left\{1+\sum_{j=1}^{i-1}d_j,\ldots,d_i+\sum_{j=1}^{i-1}d_j\right\}$
for each $i\in [i^*]$, i.e., ${\bf X}'$ assigns the best-quality
items to the first $i^*$ buyers. Note that the set of buyers
$[i^*-1]$ belongs to $W({\bf X}')\cap W({{\bf X}})$. Moreover, since
$({{\bf X}},{{\bf p}})$ is envy-free, by Lemma
\ref{orderedqualities} and the fact that ${\bf X}'$ assigns the
first $g:=\sum_{i=1}^{i^*-1}d_i$ best-quality items to the first
$i^*-1$ buyers, it follows that $\sum_{j=g+1}^m
q_j\geq\sum_{j=f(i'')}^m q_j$. This inequality, together with
$d_{i^*}>\sum_{i>i^*:i\in W({{\bf X}})} d_i$, implies that $\sum_{i>
i^*:i\in W({{\bf X}})}\sum_{j\in {X}_i}q_j\leq\sum_{j\in
X'_{i^*}}q_j$.

Hence,we have that
\begin{eqnarray*}
\frac 1 2 rev({{\bf X}},{{\bf p}}) & < & \sum_{i>i^*:i\in W({{\bf X}})}\sum_{j\in {{X}}_i}{p}_j\\
& \leq & \sum_{i>i^*:i\in W({{\bf X}})}\left(v_i\sum_{j\in {{X}}_i} q_j\right)\\
& < & \sum_{i>i^*:i\in W({{\bf X}})}\left(v_{i^*}\sum_{j\in {{X}}_i} q_j\right)\\
& \leq & v_{i^*}\sum_{j\in X'_{i^*}}q_j\\
& \leq & v_{i^*}\sum_{j\in M_{i^*}({\bf X}')}q_j\\
& \leq & rev({\bf X}',\widetilde{{\bf p}})\\
& \leq & rev({\bf X}^*_{i^*},\widetilde{{\bf p}})\\
& \leq & rev({\sf Prefix}),
\end{eqnarray*}
which yields the claim.

If $v_{i^*}=v_{\beta({{\bf X}})}$, with $i^*\in A_k$ for some $k\in
[\delta^*]$, define ${\bf X}'$ as the $\alpha(k)$-prefix of $I$ such
that
$X'_i=\left\{1+\sum_{j=1}^{i-1}d_j,\ldots,d_i+\sum_{j=1}^{i-1}d_j\right\}$
for each $i\in [\alpha(k)]$. Note that the set of buyers $[i'-1]$
belongs to $W({\bf X}')\cap W({{\bf X}})$. Moreover, since $({{\bf
X}},{{\bf p}})$ is envy-free, by Lemma \ref{orderedqualities} and
the fact that ${\bf X}'$ assigns the first
$g':=\sum_{i=1}^{i'-1}d_i$ best-quality items to the first $i'-1$
buyers, it follows that $\sum_{j=g'+1}^m q_j\geq\sum_{j=f(i')}^m
q_j$. This inequality, together with the fact that
$\sum_{i=i'}^{\alpha(k)}d_i\geq\sum_{A\subseteq
A_k:d(A_{<k})+d(A)\leq m}d(A)$ for each $k\in [\delta^*]$ which
comes from the definition of $\delta^*$ and $\widetilde{A}$, implies
that $\sum_{i\geq i':i\in W({{\bf X}})}\sum_{j\in
{X}_i}q_j\leq\sum_{i=i'}^{\alpha(k)}\sum_{j\in X'_i}q_j$.

Hence,we have that
\begin{eqnarray*}
\frac 1 2 rev({{\bf X}},{{\bf p}}) & < & \sum_{i\geq i':i\in W({{\bf X}})}\sum_{j\in {{X}}_i}{p}_j\\
& \leq & \sum_{i\geq i':i\in W({{\bf X}})}\left(v_k\sum_{j\in {{X}}_i} q_j\right)\\
& = & \sum_{i\geq i':i\in W({{\bf X}})}\left(v_{i^*}\sum_{j\in {{X}}_i} q_j\right)\\
& \leq & v_{i^*}\sum_{i=i'}^{\alpha(k)}\sum_{j\in X'_i}q_j\\
& = & v_{i^*}\sum_{j\in M_{i^*}({\bf X}')}q_j\\
& \leq & rev({\bf X}',\widetilde{{\bf p}})\\
& \leq & rev({\bf X}^*_{i^*},\widetilde{{\bf p}})\\
& \leq & rev({\sf Prefix}),
\end{eqnarray*}
which yields the claim. 
\end{proof}

We conclude this section by showing that the approximation ratio achieved by Algorithm $\sf Prefix$ is the best possible one for proper instances.

\begin{theorem}\label{last_inapprox_res}
For any $0<\epsilon\leq 1$, the {\sf RMPSD} on proper instances cannot be approximated to a factor $2-\epsilon$ unless {\sf P} $=$ {\sf NP}.
\end{theorem}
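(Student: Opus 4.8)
The plan is to prove the $(2-\epsilon)$-inapproximability for proper instances by a reduction from the \textsf{Constrained Partition} problem, mirroring the construction used for Theorem~\ref{maininapprox} but modifying it so that every buyer can in principle be a winner. Given an instance $I$ of \textsf{Constrained Partition} with numbers $q_1,\dots,q_{2(k-1)}$ summing to $Q$, I would build a proper \textsf{RMPSD} instance $I'$ in which a ``small'' perturbation of the \textsf{Partition} structure is embedded among the item qualities, but now the number of items and the demands are chosen so that for \emph{every} buyer $i$ one has $d_i+\sum_{i'\mid v_{i'}>v_i,\,d_{i'}\le d_i}d_{i'}\le m$, i.e.\ the instance is proper. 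The key design goal is: if $I$ has a positive answer, $I'$ admits an envy-free outcome of revenue at least (roughly) $R$, whereas if $I$ has no positive answer, every envy-free outcome of $I'$ has revenue at most (roughly) $R/2+o(R)$, so that a $(2-\epsilon)$-approximation algorithm would decide \textsf{Constrained Partition} in polynomial time.

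The heart of the argument is a gadget in which the critical buyer $i^*$ (the analogue of buyer $3$ in the previous reduction) must, in any envy-free outcome, receive a bundle whose total quality is \emph{exactly} $Q$ — forced from above and from below by envy constraints coming from two neighbouring buyers, exactly as in Lemmas~\ref{lemma2} and~\ref{lemma3}. The difference is that those neighbouring buyers must now themselves be ``serviceable'' (not useless), so I would give them demands and let the item pool be large enough that they could be winners; the point is that in the \emph{high-revenue} outcome they happen not to be winners, but the instance is still proper. The quality $Q$ can be split into a $Q/2$ item plus a $(k-1)$-subset of the inherited numbers only when $I$ is a yes-instance; when it is a no-instance, buyer $i^*$ is forced to take a bundle that either overprices (so by Lemma~\ref{nooverprice} only the last winner can overprice, limiting revenue) or yields strictly less revenue, and I would tune the parameters so the shortfall is a factor close to $2$. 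Concretely, I expect the high-revenue solution to route roughly half the total available quality/price mass through the ``partition'' gadget and half through a bulk of cheap filler items, so that failing the gadget loses essentially half the revenue.

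The main obstacle, and where the bulk of the work lies, is the quantitative calibration: choosing the buyer values $v_i$ (of the form $1+\Theta(1/\lambda)$ with a carefully chosen numerator, as in the earlier reduction) and the demands so that (a) $I'$ is provably proper, (b) the envy-free outcomes of $I'$ are confined to a short list of structural ``cases'' as in the paragraph following Theorem~\ref{maininapprox}, and (c) in the bad case the best achievable revenue is at most $(1/2+o(1))$ times the good-case revenue, with the $o(1)$ term small enough to beat any fixed $\epsilon\in(0,1]$ once $k$ (hence $m$) is large. I would first fix the qualitative structure — how many filler items of quality $\overline q$, one item of quality $Q/2$, the $2(k-1)$ inherited items, and a block of ``large'' items for the high-value buyer — then write down the envy inequalities that pin buyer $i^*$'s bundle quality to $Q$, then verify properness by a direct check of the inequality in Definition~\ref{proper-def} for each of the $n=O(1)$ buyers, and finally assemble the revenue bounds. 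The envy-freeness verification of the good-case outcome can be outsourced to Lemma~\ref{prezzatura} for the winners, leaving only the non-winners' deviation constraints to check by the same $\sum_{j\in T}u_{ij}=0$ telescoping computation used in Lemma~\ref{lemma1}; I expect that computation to go through with the same cancellation pattern, so it is routine once the parameters are set.
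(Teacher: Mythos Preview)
Your high-level plan matches the paper's: reduce from \textsf{Constrained Partition} by reusing the construction of Theorem~\ref{maininapprox} and modifying it so that the instance becomes proper while a factor-$2$ revenue gap survives. But your proposal is vague precisely at the step that carries the argument, and your stated mechanism (``let the item pool be large enough that [buyers $2$ and $4$] could be winners'') is not what the paper does and, on its own, is unlikely to produce the gap you need.

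The paper's actual device is to add a \emph{new top buyer} $0$ with demand $d_0=k$ and value $v_0=\frac{(\lambda-2k)\overline q}{(Q+\overline q)k}>2$, together with $k{+}1$ extra items of quality $Q+\overline q$. This single addition makes the properness inequality hold (with equality) for buyers $2$ and $4$, and it is also the source of the factor $2$: in the yes-case, buyer $0$ receives $k$ of the new items (the $(k{+}1)$st is left unsold, price-isolating her bundle under $\widetilde{\bf p}$), contributing $\approx(\lambda-2k)\overline q$ on top of the $\approx(\lambda-2k)\overline q$ coming from buyer $5$'s filler items. The partition gadget (buyer $3$'s bundle) carries almost no revenue itself; its role is purely structural, enabling the case-$(5)$ allocation to be envy-free. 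In the no-case, every envy-free outcome either abandons case $(5)$ (losing buyer $5$'s contribution) or must hand the spare $Q+\overline q$ item to buyer $1$, which through the envy chain caps each of buyer $0$'s prices at $2(Q+\overline q)$ and again costs roughly half the revenue. So your intuition that ``half the mass routes through the partition gadget'' is off, and a plan that merely enlarges the item pool so that buyers $2$ and $4$ become winnable risks raising the no-case revenue and erasing the gap; the missing idea is the calibrated high-value buyer $0$ that simultaneously fixes properness and doubles the yes-case revenue without helping the no-case.
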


\begin{proof}
For an integer $k\geq 3$, consider an instance $I$ of the {\sf Constrained Partition} problem with $2(k-1)$ numbers $q_1,\ldots,q_{2(k-1)}$ such that $\sum_{i=1}^{2(k-1)}q_i=Q$ and define $q_{min}=\min_{i\in [2(k-1)]}\{q_i\}$ and $q_{max}=\max_{i\in [2(k-1)]}\{q_i\}$. Remember that, by definition, $Q$ is even and it holds $\frac 3 2 q_{min}\geq q_{max}$. Also in this case, as observed in the proof of Theorem \ref{maininapprox}, it holds $q_j<Q/2$ for each $j\in [2(k-1)]$.

For any $0<\epsilon\leq 1$, define $$\lambda=\max\left\{600k^2;\left\lceil\frac{4(k+1)}{\epsilon}+\frac{(5k+3)(2-\epsilon)Q}{\epsilon \overline{q}}\right\rceil-2\right\}.$$ We create an instance $I'$ of the {\sf RMPSD} as done in the proof of Theorem \ref{maininapprox} with the addition of a buyer $0$, with $v_0=\frac{(\lambda-2k)\overline{q}}{(Q+\overline{q})k}$ and $d_0=k$, and $k+1$ items of quality $Q+\overline{q}$.

We first show that $v_0>2=v_1$. It holds
\begin{eqnarray*}
v_0 & = & \frac{(\lambda-2k)\overline{q}}{(Q+\overline{q})k}\\
& > & \frac{(\lambda-2k)\overline{q}}{(3(k-1)q_{min}+\overline{q})k}\\
& = & \frac{(\lambda-2k)\overline{q}}{(300(k-1)\overline{q}+\overline{q})k}\\
& \geq & \frac{600k^2-2k}{300k^2-299k}\\
& > & 2,
\end{eqnarray*}
where the first inequality follows from $Q\leq 2(k-1)q_{max}\leq 3(k-1)q_{min}$.

Moreover, note that, in the proof of Theorem \ref{maininapprox}, we
only needed $\lambda>3k$ in order to show that $v_i>v_{i+1}$ for
each $i\in [4]$. Hence, we can conclude that $v_i>v_{i+1}$ for each
$0\leq i\leq 4$. It follows that, with the addition of buyer $0$ and
the $k+1$ items of quality $Q+\overline{q}$, the instance $I'$ is
now proper.

The spirit of the proof is the same of that used in the one of
Theorem \ref{maininapprox}, i.e., we show that, if $I$ admits a
positive answer, then there exists a solution for $I'$ with revenue
above a certain value, while, if $I$ admits no positive answers,
then all the solutions for $I'$ must raise a revenue below a certain
other value.

First of all, let us determine the set of all possible non-empty
allocation vectors able to yield an envy-free outcome. To this aim,
we can claim the following set of constraints which come from the
fact that $v_i>v_{i+1}$ for each $0\leq i\leq 4$:
\begin{enumerate}
\item[$i)$] Since $d_0\leq d_i$ for each $i\geq 1$, it must be $X_0\neq\emptyset$;
\item[$ii)$] Since $d_1\leq d_i$ for each $i\geq 2$, it must be $X_1\neq\emptyset$ when $\bigcup_{i=2}^5 X_i\neq\emptyset$;
\item[$iii)$] Since $d_3\leq d_i$ for each $i\geq 4$, it must be $X_3\neq\emptyset$ when $X_4\cup X_5\neq\emptyset$;
\item[$iv)$] Since $d_2\leq d_3+d_4$, it must be $X_2\neq\emptyset$ when $X_3,X_4\neq\emptyset$;
\end{enumerate}
Hence, for each envy-free outcome $({\bf X},{\bf p})$, $\bf X$ can only fall into one of the following five cases:
\begin{enumerate}
\item $X_0\neq\emptyset$ and $X_i=\emptyset$ for each $i\geq 1$;
\item $X_0,X_1\neq\emptyset$ and $X_i=\emptyset$ for each $i\geq 2$;
\item $X_0,X_1,X_2\neq\emptyset$ and $X_i=\emptyset$ for each $i\geq 3$;
\item $X_0,X_1,X_3\neq\emptyset$ and $X_2,X_4,X_5=\emptyset$;
\item $X_0,X_1,X_3,X_5\neq\emptyset$ and $X_2,X_4=\emptyset$.
\end{enumerate}
When $\bf X$ falls into case $(1)$, for any pricing vector $\bf p$
such that $({\bf X},{\bf p})$ is envy-free, it holds $rev({\bf
X},{\bf p})\leq v_0 k (Q+\overline{q})=(\lambda-2k)\overline{q}$.
When $\bf X$ falls into case $(2)$, for any pricing vector $\bf p$
such that $({\bf X},{\bf p})$ is envy-free, it holds $rev({\bf
X},{\bf p})\leq v_0 k
(Q+\overline{q})+2(kQ+\overline{q})=(\lambda-2k)\overline{q}+2(kQ+\overline{q})$.
When $\bf X$ falls into case $(4)$, for any pricing vector $\bf p$
such that $({\bf X},{\bf p})$ is envy-free, it holds $rev({\bf
X},{\bf p})\leq v_0 k (Q+\overline{q})+2(kQ+\overline{q})+\frac 3 2
v_3 kQ<(\lambda-2k)\overline{q}+5kQ+2\overline{q}$ since $v_3<2$.

When $\bf X$ falls into case $(3)$, $X_0$ can only contain items of
quality $Q+\overline{q}$, the remaining item of quality
$Q+\overline{q}$, denote it by $j$, must be assigned to $X_1$ and
$X_2$ must contain an item of quality $Q$. For any pricing vector
$\bf p$ such that $({\bf X},{\bf p})$ is envy-free, there must exist
an item $j'\in X_2$ such that $p_{j'}\leq v_2 q_{j'}<2q_{j'}$.
Moreover, it must be $u_{1j}=2(Q+\overline{q})-p_j\geq
u_{1j'}=2q_{j'}-p_{j'}>0$ which implies $p_{j}\leq
2(Q+\overline{q})$. Finally, for each item $j''\in X_0$, it must be
$p_{j''}=p_{j'}$ since $q_{j''}=q_{j'}$. Hence, it holds
\begin{eqnarray*}
rev({\bf X},{\bf p}) & \leq & 4kQ+2(k+1)\overline{q}+v_2\left(\frac 5 2 Q+(\lambda-2k)\overline{q}\right)\\
& = & 4kQ+2(k+1)\overline{q}+\frac 5 2 Q+(\lambda-2k)\overline{q}\\
& & +\frac 1 \lambda \frac{Q-2k\overline{q}+kQ(\lambda+1)/2}{Qk+Q-2k\overline{q}+\lambda\overline{q}} \left(\frac 5 2 Q+(\lambda-2k)\overline{q}\right)\\
& = & \left(4k+\frac 5 2\right)Q+(\lambda+2)\overline{q}+\frac{(2(\lambda-2k)\overline{q}+5Q)(kQ(\lambda+1)+2Q-4k\overline{q})}{4\lambda((\lambda-2k)\overline{q}+(k+1)Q)}\\
& < & (4k+3)Q+(\lambda+2)\overline{q}+\frac{kQ(\lambda+1)+2Q}{2\lambda}\\
& < & (4k+3)Q+(\lambda+2)\overline{q}+kQ\\
& = & (5k+3)Q+(\lambda+2)\overline{q},
\end{eqnarray*}
where the first strict inequality follows from $2(k+1)>5$ and the second one follows from $k+2<k\lambda$.

Hence, we can conclude that, when $\bf X$ falls into one of the
cases from $(1)$ to $(4)$, for any pricing vector $\bf p$ such that
$({\bf X},{\bf p})$ is envy-free, it holds $rev({\bf X},{\bf
p})<(5k+3)Q+(\lambda+2)\overline{q}$.

In the remaining of this proof, we restrict to the case in which $\bf X$ falls into case $(5)$.

\begin{lemma}\label{lemma1bis}
If there exists a positive answer to $I$, then there exists an
envy-free outcome for $I'$ of revenue greater than
$2(\lambda-2k)\overline{q}$.
\end{lemma}

\begin{proof}
Consider the allocation vector ${\bf X}$ such that $X_0$ contains
the $k$ items of quality $Q+\overline{q}$, $X_1$ contains $k$ items
of quality $Q$, $X_3$ contains the item of quality $Q/2$ plus the
$k-1$ items forming a positive answer to $I$, $X_5$ contains the
$\lambda-2k$ items of quality $\overline{q}$ and
$X_2=X_4=\emptyset$. Note that $\bf X$ is monotone. We show that the
outcome $({\bf X},\widetilde{{\bf p}})$ is envy-free.

According to the price vector $\widetilde{\bf p}$, it holds
$\widetilde{p}_j=\frac{(\lambda-2k)\overline{q}}{k}+\left(3+\frac 1
\lambda\right)\frac Q 2-\frac{\overline{q}}{\lambda}$ for each $j\in
X_0$, $\widetilde{p}_j=\frac{(3\lambda+1)Q-2\overline{q}}{2\lambda}$
for each $j\in X_1$,
$\widetilde{p}_j=\frac{(\lambda+1)q_j-\overline{q}}{\lambda}$ for
each $j\in X_3$ and $\widetilde{p}_j=\overline{q}$ for each $j\in
X_5$.

Because of Lemma \ref{prezzatura}, in order to show that $({\bf
X},\widetilde{{\bf p}})$ is envy-free, we only need to prove that,
for each buyer $i\notin W({\bf X})$ and $T\subseteq M$ with
$|T|=d_i$, it holds $\sum_{j\in T}u_{ij}\leq 0$. Note that the
buyers not belonging to $W({\bf X})$ are buyers $2$ and $4$.

For buyer $2$, for each pair of items $(j,j')$ with $j\in X_0$ and
$j'\in X_1$, it holds $u_{2j}<u_{2j'}$, for each pair of items
$(j',j'')$ with $j'\in X_1$ and $j''\in X_3$, it holds
$u_{2j'}<u_{2j''}$ and, for each pair of items $(j',j''')$ with
$j'\in X_1$ and $j'''\in X_5$, it holds $u_{2j'}<u_{2j'''}$. In
fact, we have
\begin{eqnarray*}
u_{2j'}-u_{2j} & = & \frac{\lambda\overline{q}}{k} -2\overline{q}-v_2 \overline{q}\\
& > & \left(\frac \lambda k -4\right)\overline{q}\\
& > & 0,
\end{eqnarray*}
where the first inequality follows from $v_2<2$ and the second one follows from $\lambda>4k$;
\begin{eqnarray*}
u_{2j''}-u_{2j'} & = & v_2 q_j -q_j-\frac{q_j}{\lambda}-v_2 Q+\frac 3 2 Q+\frac{Q}{2\lambda}\\
& > & \frac 1 \lambda\left(\frac Q 2-q_j\right)\\
& > & 0,
\end{eqnarray*}
where the first inequality follows from $1<v_2<3/2$ and the second one follows from $q_j<Q/2$ for each $j\in X_3$; and
\begin{eqnarray*}
u_{2j'''}-u_{2j'} & = & v_2 \overline{q} -\overline{q}-v_2 Q+\frac 3 2 Q+\frac{Q}{2\lambda}-\frac{\overline{q}}{\lambda}\\
& > & 0,
\end{eqnarray*}
where the inequality follows from $1<v_2<3/2$ and $\overline{q}<Q/2$.

Hence, the set of items of cardinality $d_2$ which gives the highest utility to buyer $2$ is $T=X_1\cup X_3\cup X_5$. It holds
\begin{displaymath}
\begin{array}{ll}
& \displaystyle\sum_{j\in T}\left(v_2 q_j-\widetilde{p}_j\right)\\
= & k\left(v_2 Q-\frac 3 2 Q-\frac{Q}{2\lambda}+\frac{\overline{q}}{\lambda}\right)+v_2 Q-Q-\frac Q \lambda+\frac{k\overline{q}}{\lambda}+(\lambda-2k)(v_2 \overline{q}-\overline{q})\\
= & 0.
\end{array}
\end{displaymath}

For buyer $4$, for each pair of items $(j,j')$ with $j\in X_0$ and
$j'\in X_1$, it holds $u_{4j}<u_{4j'}$, for each pair of items
$(j',j'')$ with $j'\in X_1$ and $j''\in X_3$, it holds
$u_{4j'}<u_{4j''}$ and, for each pair of items $(j',j''')$ with
$j'\in X_1$ and $j'''\in X_5$, it holds $u_{4j'}<u_{4j'''}$. In
fact, we have
\begin{eqnarray*}
u_{4j'}-u_{4j} & = & \frac{\lambda\overline{q}}{k} -2\overline{q}-v_4 \overline{q}\\
& > & \left(\frac \lambda k -4\right)\overline{q}\\
& > & 0,
\end{eqnarray*}
where the first inequality follows from $v_4<2$ and the second one follows from $\lambda>4k$;
\begin{eqnarray*}
u_{4j''}-u_{4j'} & = & v_4 q_j -q_j-\frac{q_j}{\lambda}-v_4 Q+\frac 3 2 Q+\frac{Q}{2\lambda}\\
& > & \frac 1 \lambda\left(\frac Q 2-q_j\right)\\
& > & 0,
\end{eqnarray*}
where the first inequality follows from $1<v_4<3/2$ and the second one follows from $q_j<Q/2$ for each $j\in X_3$; and
\begin{eqnarray*}
u_{4j'''}-u_{4j''} & = & v_4 \overline{q} -\overline{q}-v_4 q_j+q_j+\frac{q_j}{\lambda}-\frac{\overline{q}}{\lambda}\\
& = & (q_j-\overline{q})\left(1+\frac 1 \lambda-v_4\right)\\
& > & 0
\end{eqnarray*}
where the inequality follows from $v_4<1+1/\lambda$ and $q_j>\overline{q}$ for each $j\in X_3$.

Hence, the set of items of cardinality $d_4$ which gives the highest utility to buyer $4$ is $T=X_3\cup X_5$. It holds
\begin{displaymath}
\begin{array}{ll}
& \displaystyle\sum_{j\in T}\left(v_4 q_j-\widetilde{p}_j\right)\\
= & v_4 Q-Q-\frac Q \lambda+\frac{k\overline{q}}{\lambda}+(\lambda-2k)(v_4 \overline{q}-\overline{q})\\
= & 0.
\end{array}
\end{displaymath}

Hence, we can conclude that the outcome $({\bf X},\widetilde{{\bf
p}})$ is envy-free and it holds $rev({\bf X},\widetilde{{\bf
p}})>2(\lambda-2k)\overline{q}$.\qed
\end{proof}

We continue by showing that, for any envy-free outcome $({\bf
X},{\bf p})$ falling into case $(5)$ and such that $X_1$ contains an
item of quality $Q+\overline{q}$, it holds $rev({\bf X},{\bf
p})<(\lambda+2)\overline{q}+(4k+3)Q$.

Note that, in such a case, by Lemma \ref{orderedqualities}, $X_0$
can only contain items of quality $Q+\overline{q}$. For any pricing
vector $\bf p$ such that $({\bf X},{\bf p})$ is envy-free, there
must exist an item $j'\in X_5$ such that $p_{j'}\leq q_{j'}$. Let
$j''$ be the index of the item of quality $Q+\overline{q}$ belonging
to $X_1$. By the envy-freeness of $({\bf X},{\bf p})$, it holds
$u_{1j''}=2(Q+\overline{q})-p_{j''}\geq 2q_{j'}-p_{j'}=q_{j'}$ which
implies $p_{j''}<2(Q+\overline{q})$. Clearly, since $({\bf X},{\bf
p})$ is envy-free, for each item $j\in X_0$, it must be
$p_j=p_{j''}$ since $q_j=q_{j''}$. Hence, it holds $rev({\bf X},{\bf
p})<4kQ+2(k+1)\overline{q}+\frac 3 2 v_3
Q+(\lambda-2k)\overline{q}<(\lambda+2)\overline{q}+(4k+3)Q$ because
$v_3<2$.

Since it holds
$(\lambda+2)\overline{q}+(4k+3)Q<(\lambda+2)\overline{q}+(5k+3)Q$,
it follows that, either when $\bf X$ falls into case $(5)$ and $X_1$
contains an item of quality $Q+\overline{q}$ or $\bf X$ falls into
one of the cases from $(1)$ to $(4)$, for any pricing vector $\bf p$
such that $({\bf X},{\bf p})$ is envy-free, it holds $rev({\bf
X},{\bf p})\leq (\lambda+2)\overline{q}+(5k+3)Q$.

Now we are only left to consider envy-free outcomes $({\bf X},{\bf
p})$ such that $\bf X$ falls into case $(5)$ and $X_1$ does not
contain any item of quality $Q+\overline{q}$.

Assume that $\sum_{j\in X_3}>Q$. This can only happen when buyer $3$
is assigned an item of quality at least $Q/2$. In such a case, since
$X_1$ does not contain any item of quality $Q+\overline{q}$, it can
only be the case that each item in $X_1$ is of quality $Q$ and $X_3$
gets the item of quality $Q/2$. This means that the items allocated
by ${\bf X}$ to buyers $1$, $3$ and $5$ are drawn from the same
instance $I'$ considered in the proof of Theorem \ref{maininapprox}.
Hence, we can replicate the arguments used in the proof of Lemma
\ref{lemma2} to show that $\sum_{j\in X_3}>Q$ yields a
contradiction.

Similarly, assume that $\sum_{j\in X_3}<Q$. This can only happen
when the items allocated by ${\bf X}$ to buyers $3$ and $5$ are
drawn from the same instance $I'$ considered in the proof of Theorem
\ref{maininapprox}. Hence, we can replicate the arguments used in
the proof of Lemma \ref{lemma3} to show that $\sum_{j\in X_3}<Q$
yields a contradiction.

We can conclude that there exists an envy-free outcome $({\bf
X},{\bf p})$ falling into case $(5)$ in which no item of quality
$Q+\overline{q}$ belongs to $X_1$ only if $\sum_{j\in X_3}q_j=Q$.
Since, as we have already observed, in such a case the item of
quality $Q/2$ has to belong to $X_3$, it follows that there exists
an envy-free outcome $({\bf X},{\bf p})$ falling into case $(5)$ in
which no item of quality $Q+\overline{q}$ belongs to $X_1$ only if
there are $k-1$ items inherited from $I$ whose sum is exactly $Q/2$,
that is, only if $I$ admits a positive solution.

Any other envy-free outcome can raise a revenue of at most
$(\lambda+2)\overline{q}+(5k+3)Q$. Hence, if there exists a positive
answer to $I$, then, by Lemma \ref{lemma1bis}, there exists a
solution to $I'$ of revenue strictly greater than
$2(\lambda-2k)\overline{q}$, while, if there is no positive answer
to $I$, then there exists no solution to $I'$ of revenue more than
$(\lambda+2)\overline{q}+(5k+3)Q$.

Thus, if there exists an $r$-approximation algorithm for the {\sf
RMPSD} on continuous instances with
$r\leq\frac{2(\lambda-2k)\overline{q}}{(\lambda+2)\overline{q}+(5k+3)Q}$,
it is then possible to decide in polynomial time the {\sf
Constrained Partition} problem, thus implying ${\sf P}={\sf NP}$. By
$\lambda\geq\frac{4(k+1)}{\epsilon}+\frac{(5k+3)(2-\epsilon)Q}{\epsilon
\overline{q}}-2$, it follows
$\frac{2(\lambda-2k)\overline{q}}{(\lambda+2)\overline{q}+(5k+3)Q}\geq
2-\epsilon$ which implies the claim.\qed
\end{proof}

\end{document}